\def\eqref#1{equation~\ref{#1}}
\def\1{\bm{1}}
\def\va{{\bm{a}}}
\def\vm{{\bm{m}}}
\def\vt{{\bm{t}}}
\def\vv{{\bm{v}}}
\def\vx{{\bm{x}}}
\def\vy{{\bm{y}}}
\def\vz{{\bm{z}}}
\def\mL{{\bm{L}}}
\def\mS{{\bm{S}}}
\DeclareMathAlphabet{\mathsfit}{\encodingdefault}{\sfdefault}{m}{sl}
\SetMathAlphabet{\mathsfit}{bold}{\encodingdefault}{\sfdefault}{bx}{n}
\newtheorem{theorem}{Theorem}[section]
\newtheorem{lemma}[theorem]{Lemma}
\newtheorem{definition}[theorem]{Definition}
\newtheorem*{theorem*}{Theorem}
\newtheorem*{lemma*}{Lemma}
\newtheorem*{corollary*}{Corollary}
\definecolor{red}{rgb}{0.99, 0.02, 0.02}
\NewDocumentCommand{\heng}
{ mO{} }{\textcolor{red}{\textsuperscript{\textit{Heng}}\textsf{\textbf{\small[#1]}}}}
\NewDocumentCommand{\shuiwang}
{ mO{} }{\textcolor{blue}{\textsuperscript{\textit{Shuiwang Ji}}\textsf{\textbf{\small[#1]}}}}
\NewDocumentCommand{\CongFu}
{ mO{} }{\textcolor{purple}{\textsuperscript{\textit{Cong}}\textsf{\textbf{\small[#1]}}}}
\DeclareFontFamily{U}{jkpmia}{}
\DeclareFontShape{U}{jkpmia}{m}{it}{<->s*jkpmia}{}
\DeclareFontShape{U}{jkpmia}{bx}{it}{<->s*jkpbmia}{}
\DeclareMathAlphabet{\mathfrak}{U}{jkpmia}{m}{it}
\SetMathAlphabet{\mathfrak}{bold}{U}{jkpmia}{bx}{it}
\DeclareMathOperator{\atantwo}{atan2}
\title{Fragment and Geometry Aware Tokenization of Molecules for Structure-Based Drug Design Using Language Models}
\author{%
  Cong Fu\thanks{Equal contribution} \\
  Computer Science and Engineering\\
  Texas A\&M University\\
  College Station, TX 77843, USA \\
  \texttt{congfu@tamu.edu} \\
\And
   Xiner Li\footnotemark[1]\\
  Computer Science and Engineering\\
  Texas A\&M University\\
  College Station, TX 77843, USA \\
   \texttt{lxe@tamu.edu} \\
\And
   Blake Olson\\
  Computer Science and Engineering\\
  Texas A\&M University\\
  College Station, TX 77843, USA \\
   \texttt{blakeolson@tamu.edu } \\
\And
   Heng Ji\\
  Computing and Data Science\\
  University of Illinois Urbana-Champaign\\
  Champaign, IL 61801, USA \\
   \texttt{hengji@illinois.edu} \\
\And
   Shuiwang Ji\\
  Computer Science and Engineering\\
  Texas A\&M University\\
  College Station, TX 77843, USA \\
   \texttt{sji@tamu.edu} \\
}
\begin{document}

\maketitle

\begin{abstract}
Structure-based drug design (SBDD) is crucial for developing specific and effective therapeutics against protein targets but remains challenging due to complex protein-ligand interactions and vast chemical space. Although language models (LMs) have excelled in natural language processing, their application in SBDD is underexplored. To bridge this gap, we introduce a method, known as Frag2Seq, to apply LMs to SBDD by generating molecules in a fragment-based manner in which fragments correspond to
functional modules.
We transform 3D molecules into fragment-informed sequences using $SE(3)$-equivariant molecule and fragment local frames, extracting $SE(3)$-invariant sequences that preserve geometric information of 3D fragments. Furthermore, we incorporate protein pocket embeddings obtained from a pre-trained inverse folding model into the LMs via cross-attention to capture protein-ligand interaction, enabling effective target-aware molecule generation. 
Benefiting from employing LMs with fragment-based generation and effective protein context encoding, our model achieves the best performance on binding vina score and chemical properties such as QED and Lipinski, which shows our model’s efficacy in generating drug-like ligands with higher binding affinity against target proteins. Moreover, our method also exhibits higher sampling efficiency compared to atom-based autoregressive and diffusion baselines with at most $\sim 300 \times$ speedup.


\end{abstract}

\section{Introduction}
Structure-based drug design (SBDD) is a critical method in medicinal chemistry that involves the design and optimization of molecules to interact specifically and effectively with biological targets, typically protein pockets~\citep{anderson2003process}. This approach is fundamental in developing new therapeutic drugs as it allows for more precise interaction with biological systems, potentially reducing side effects and increasing efficacy. Traditionally, SBDD relies heavily on high-throughput virtual screening~\citep{lyne2002structure,shoichet2004virtual} and experimental validation. These processes are not only time-consuming and labor-intensive but also require substantial financial resources. 


In recent times, machine learning has emerged as a promising approach for advancing scientific endeavors~\citep{zhang2023artificial, liu2021dig,luo2021autoregressive}. SBDD poses great challenges to machine learning models as it requires the model to capture complicated protein-ligand interaction while improving the drug-likeness of designed molecules. Earlier works have tried to use autoregressive models~\citep{luo20213d,peng2022pocket2mol} and diffusion models~\citep{guan20223d,guan2023decompdiff} to encode context information and generate molecules. However, these models only consider atom-wise generation, and diffusion models typically need thousands of steps to generate, which results in an inefficient generation. Another line of work chooses to generate molecules based on molecular fragments~\citep{zhang2022molecule,zhang2023learning}, but it often requires a complicated pipeline involving several neural networks to choose and link fragments.

Recently, language models (LMs) have demonstrated substantial promise in various fields due to their robust data processing and generative capabilities~\cite{vaswani2017attention, devlin2018bert, brown2020language, gu2021efficiently}. These models, especially large language models (LLMs)~\cite{zhang2024scientific, touvron2023llama, achiam2023gpt, chowdhery2023palm}, excel in learning complex patterns and producing coherent outputs, which makes them ideal candidates for advanced tasks in natural language processing and beyond. 
Despite these strengths, the application of LMs to SBDD remains largely unexplored. 
LMs offer notable advantages, such as handling large datasets with prominent efficiency over diffusion-based methods, learning from massive biological and chemical texts for diverse potential tasks, and generating drug-like molecules with high specificity. However, challenges persist, particularly the need to adapt LMs to process geometric graph structures of molecular data, which are fundamentally different from textual data. Moreover, it's promising to consider how to apply LMs to perform fragment-based generation, which can generate more realistic substructures and reduce generation steps to improve efficiency. Additionally, ensuring that these models can accurately simulate the physical and chemical properties of molecules remains a significant hurdle, and it's also crucial to consider how to encode protein context information in LMs to effectively capture protein-ligand interaction in order to generate molecules that can bind to target proteins. Addressing these challenges could unlock the transformative potential of LMs in SBDD, leading to more efficient and effective drug discovery processes. 

In this work, we propose to employ LMs to generate molecules in a fragment-based manner for the SBDD task. To achieve this, we developed a novel approach to convert 3D molecules into fragment-informed sequences by constructing $SE(3)$-equivariant molecule and fragment local frames and then extracting $SE(3)$-invariant sequences that contain the $SE(3)$-invariant coordinates and orientations of 3D fragments. To consider protein conditions, we use an existing pre-trained inverse folding model to extract the embedding of protein pockets and incorporate this information into LMs by cross-attention mechanism. In this way, our method enables the usage of powerful LMs in the target-aware molecule generation while keeping geometric information in sequence conversion.
Our experimental results evidence these advantages, outperforming strong baselines in both efficacy and efficiency.

\section{Related Works}
\textbf{Structure-Based Drug Design.} Structure-based drug design aims to generate molecules that can bind to the given protein pocket. LiGAN~\citep{ragoza2022generating} uses an atomic density grid to represent protein-ligand structures. The atom type, positions, and bonds are constructed from generated density grids. Then, the following works~\citep{luo20213d,peng2022pocket2mol,liu2022generating} generate molecules using autoregressive models. For example, GraphBP~\citep{liu2022generating} uses normalizing flow to generate each atom's relative position in an equivariant way by constructing local coordinate systems. Pocket2Mol~\citep{peng2022pocket2mol} also considers bond generation to improve the structure validity. Compared with autoregressive models, diffusion-based SBDD models~\citep{schneuing2022structure,guan20223d,guan2023decompdiff} can generate all atoms of molecules in a one-shot way. For example, DiffSBDD~\citep{schneuing2022structure} denoises all atom positions and types from a Gaussian distribution. To generate more realistic substructures, FLAG~\citep{zhang2022molecule} and DrugGPS~\citep{zhang2023learning} adopt fragment-based methods to generate molecules motif-by-motif. For those methods using language models, TamGen~\citep{xia2024target} only generates SMILES, and Lingo3DMol~\citep{feng2024generation} generates fragment-based SMILES then predicts molecule coordinates. They both need pre-training on millions of SMILES or 3D molecule structures to achieve good results.

\textbf{Language Models for Chemistry.} 
Drawing inspiration from the success of LMs in NLP and beyond, chemical language models (CLMs) emerge as a competent way for representing molecules~\cite{bran2023transformers,janakarajan2023language,bajorath2024chemical,zhang2024scientific}. 
Variants of LMs have been adapted for molecular science, producing a variety of works including 
DrugGPT~\cite{li2023druggpt}, DrugChat~\cite{liang2023drugchat}, MoleculeSTM~\cite{liu2023multi}, ChemGPT~\cite{frey2023neural} MolGPT~\cite{bagal2021molgpt}, MolReGPT~\cite{li2023empowering}, MolT5~\cite{edwards2022translation}, MoleculeGPT~\cite{zhang2023moleculegpt}, InstructMol~\cite{cao2023instructmol}, and many others~\cite{luo2022biogpt,mao2023transformer,haroon2023generative, blanchard2023adaptive}.
CLMs learn the chemical vocabulary
and syntax used to represent molecules.
All inputs including chemical structures and property syntax should be converted into a sequence form and tokenized for compatibility with language models.
Commonly, SMILES~\cite{weininger1988smiles} is used for this sequential representation, although other formats like SELFIES~\cite{krenn2019selfies}, atom type strings, and custom strings with positional or property values are also viable options.
To learn representations, CLMs are usually pre-trained on extensive molecular sequences through self-supervised learning. Subsequently, models are fine-tuned
on more focused datasets with desired properties, such as activity against a target protein.
Most existing CLM works consider chemical structures as well as other modalities such as natural language captions~\cite{bagal2021molgpt,li2023empowering,li2023druggpt,edwards2022translation,xie2023darwin,chen2023artificially,tysinger2023can,xu2023molecular,chen2023molecular,pei2023biot5,liu2023molxpt,wang2023biobridge}, while some focus on pure text of chemical literature~\cite{luo2022biogpt} or molecule strings~\cite{haroon2023generative,mao2023transformer,blanchard2023adaptive,mazuz2023molecule,fang2023molecular,kyro2023chemspaceal,izdebski2023novo,yoshikai2023difficulty,wu2023fragment,mao2023deep}. Notably, all these works solely consider 2D molecules for representation learning and downstream tasks, overlooking 3D geometric structures which is crucial in many chemical predictive and generative tasks. In order to use pivotal 3D information, another line of work incorporate geometric models such as GNNs in parallel with the CLM~\cite{xia2023systematic,zhang2023moleculegpt,cao2023instructmol,liang2023drugchat,liu2023multi,frey2023neural}, which requires additional design and training techniques to mitigate alignment issues. However, no existing work uses LMs to directly generate 3D ligands in structure-based drug design.

\section{Methods}
In this section, we describe how to convert 3D molecules into sequences in a fragment-based manner, which can be effectively processed by language models. In~\cref{sec: problem_def}, we formally introduce the problem definition of SBDD. Then, in~\cref{sec: fragmentation}, we show how to split 3D molecules into 3D fragments. With these 3D fragments, next in~\cref{sec: frag_tokenization}, we introduce the way to construct a bijective mapping between 3D fragments and $SE(3)$-invariant sequences by constructing $SE(3)$-equivariant molecule and fragment local frames. Finally, in~\cref{sec: training_generation}, we describe how to incorporate conditional information of proteins into the language model and the training and generation strategies.

\subsection{Problem Definition}
\label{sec: problem_def}
Our objective is to design 3D molecules (\emph{i.e.}, ligands) that can effectively bind to a given protein target within its binding pocket while also demonstrating suitable drug-like properties. The 3D geometry of a protein binding pocket is represented as $\mathcal{P}=\{(\bm{s}_i, \bm{b}_i)\}_{i=1}^n$, where  $\bm{s}_i \in \mathbb{R}^3$ denotes the 3D Cartesian coordinates of the $i$-th atom and $\bm{b}_i$ is a one-hot vector that denotes the atom type. Similarly, we represent a ligand as $\mathcal{M}=\{(\vv_j, \bm{z}_j)\}_{i=1}^m$, where $\bm{v}_j \in \mathbb{R}^3$ and $\bm{z}_j$ denote coordinates and atom type, respectively. Our goal is to learn a conditional generative model that captures the conditional probability $p(\mathcal{M}|\mathcal{P})$ from the training protein-ligand pairs. 

\begin{figure}[t]
\vspace{-0.2in}  
  \centering
  {\includegraphics[width=0.8\textwidth]{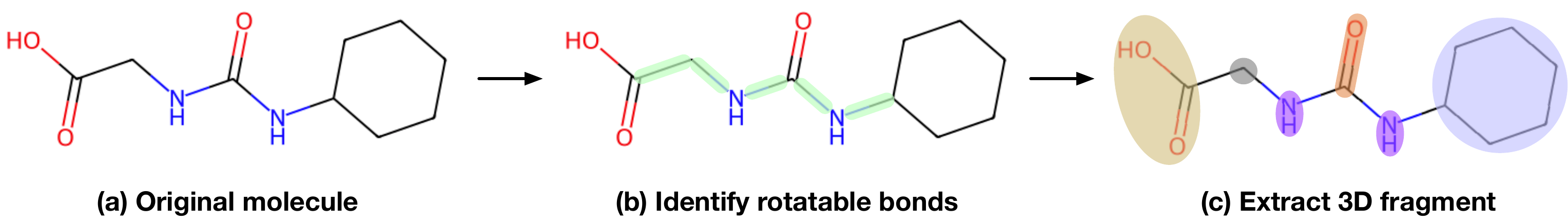}}
  \vspace{-0.1in}
  \caption{Illustration of splitting molecules into fragments.}
  \vspace{-0.2in}
  \label{fig:fragmentation}
\end{figure}

\subsection{3D Molecule Fragmentation}
\label{sec: fragmentation}

Since we generate molecules in a 3D fragment-based manner, we first need to decompose molecules in the training set into 3D fragments. Specifically, as shown in~\cref{fig:fragmentation}, we split fragments by cutting rotatable chemical bonds that meet three conditions: (1) the bond is not in a ring, (2) the bond type is single, and (3) the degree of the beginning and end atom on the bond is larger than 1. The third condition prevents breaking the functional groups, such as the hydroxy and carboxyl groups. To maintain a canonical order of splited fragments, we further sort fragments based on the order of their appearance in the canonical SMILES representation. More details of canonical order are described in~\cref{sec: atom ordering}. Formally, for a molecule $\mathcal{M}$, we split it into a set of 3D fragments $\{\mathcal{G}_i=(Z_i, V_i)\}_{i=1}^k$, where $Z_i$ denotes atom type matrix and $V_i$ denotes Cartesian coordinates. 

\subsection{Fragment-Based 3D Molecule Tokenization}
\label{sec: frag_tokenization}
\subsubsection{Atom Ordering based on 3D Graph Isomorphism}
\label{sec: atom ordering}
As the first step in molecule tokenization, we need to transform a molecular (fragment) graph into a 1D sequential representation. Thus we need to establish an order for the atoms of given 3D graphs. To acheive dimension reduction with least information loss, we seek uniqueness in the ordering and resort to canonical SMILES~\citep{o2012towards,weininger1989smiles}as a solution. 
A key property of canonical SMILES is its ability to provide a unique string representation for a given molecular structure, which is not inherently guaranteed by the basic SMILES algorithm. We refer to a set of (atom) orders with canonical properties as canonical orders~\citep{mckay1981practical,mckay2014practical}.
The canonicalization of SMILES involves a deterministic process where the algorithm selects a unique starting atom and follows a set of predefined rules to traverse the molecule in a systematic way. This results in a consistent and reproducible ordering of atoms and bonds within the SMILES string, irrespective of the initial input format. The canonical form is crucial for database searches and for ensuring consistency in chemical databases, as it prevents duplicates and allows for efficient indexing and retrieval of molecular data, which ensures maximum invariance for atom reordering.

To formally study which molecules the canonical forms can distinguish between and not, we first give the below definition following graph theories.
\begin{definition}\label{def:3d_iso}[3D Molecular Graph Isomorphism] 
Let \( \mathcal{M}_1 = (V_1, Z_1) \) and \( \mathcal{M}_2 = (V_2,Z_2) \) be two 3D molecular graphs, where \( \vz_i \) is the node type vector and \( \vv_i \) is the node coordinates of the molecule \( \mathcal{M}_i \). 
Let \( ver(\cdot) \) denote the set of vertices, \( attr(\cdot) \) denote node attributes, and no edge exists. Let $\mathcal{M}_1 \cong \mathcal{M}_2$ denote two attributed graphs are isomorphic.
Two 3D molecules $\mathcal{M}_1$ and $\mathcal{M}_2$ are \textbf{3D isomorphic}, denoted as $\mathcal{M}_1 \cong_{3D} \mathcal{M}_2$, if there exists a bijection \( b: ver(\mathcal{M}_1) \rightarrow ver(\mathcal{M}_2) \) 
such that for every atom in $\mathcal{M}_1$ indexed $i$, \( \vz^{\mathcal{M}_1}_i = \vz^{\mathcal{M}_2}_{b(i)}\), 
and there exists a 3D transformation $\tau \in SE(3)$ such that $\vv^{\mathcal{M}_1}_{i} = \tau(\vv^{\mathcal{M}_2}_{b(i)})$.
If a small error $\bm{\epsilon}$ is allowed such that $|\vv^{\mathcal{M}_1}_{i} - \tau(\vv^{\mathcal{M}_2}_{b(i)})| \leq \bm{\epsilon}$, we call the two 3D graphs \textbf{$\bm{\epsilon}$-constrained 3D isomorphic}. 
\end{definition}
This leads us to the following guarantee.
\begin{lemma}\label{thm:CL}[Canonical Ordering for 3D Molecular Graph Isomorphism]
Let \( \mathcal{M}_1 = (V_1, Z_1) \) and \( \mathcal{M}_2 = (V_2,Z_2) \) be two 3D molecular graphs following Def.~\ref{def:3d_iso}. Let \( \mL: \EuScript{M} \rightarrow \mathcal{L} \) be a function that maps a molecule \( \mathcal{M} \in \EuScript{M} \), the set of all finite 3D molecular graphs, to its canonical order \( \mL(\mathcal{M}) \in \mathcal{L} \), the set of all possible canonical orders, as produced by the canonical SMILES. Then the following equivalence holds:
\[ \mL(\mathcal{M}_1) = \mL(\mathcal{M}_2) \Leftrightarrow \mathcal{M}_1 \cong_{3D} \mathcal{M}_2 \]
where \( \mathcal{M}_1 \cong_{3D} \mathcal{M}_2 \) denotes that \( \mathcal{M}_1 \) and \( \mathcal{M}_2 \) are 3D isomorphic.
\end{lemma}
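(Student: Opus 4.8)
The plan is to prove the two implications separately, leveraging the fact stated earlier that canonical SMILES yields a \emph{unique} string representation for a molecular structure (equivalently, a canonical atom ordering invariant under graph automorphisms and independent of the input atom labeling). The subtlety is that canonical SMILES, as a classical 2D construction, sees only atom types and bonds — but here the 3D molecular graphs of Def.~\ref{def:3d_iso} carry \emph{no edges}, only node types $Z$ and coordinates $V$ modulo $SE(3)$. So the argument has to reconcile ``canonical order as produced by canonical SMILES'' with the edge-free, geometry-only notion $\cong_{3D}$. I would first address this by noting that in this pipeline the molecular graph fed to the canonicalizer is itself \emph{determined} by the 3D data (bonds are inferred from interatomic distances and atom types, as is standard in cheminformatics toolkits), so the map $\mL$ is genuinely a function of $(V,Z)$ alone; this should be stated explicitly as the bridge between the two worlds.

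For the direction $\mathcal{M}_1 \cong_{3D} \mathcal{M}_2 \Rightarrow \mL(\mathcal{M}_1) = \mL(\mathcal{M}_2)$: suppose $b$ is the relabeling bijection and $\tau \in SE(3)$ the rigid motion with $\vv^{\mathcal{M}_1}_i = \tau(\vv^{\mathcal{M}_2}_{b(i)})$ and matching atom types. Since bond inference depends only on pairwise distances and atom types, and $\tau$ preserves distances, the inferred 2D labeled graphs of $\mathcal{M}_1$ and $\mathcal{M}_2$ are isomorphic as attributed graphs via $b$. Canonical SMILES is by construction invariant under graph isomorphism (it picks a canonical representative of the isomorphism class), so it assigns the same canonical order: $\mL(\mathcal{M}_1) = \mL(\mathcal{M}_2)$. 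The one thing to be careful about is the $SE(3)$ versus $E(3)$ distinction — reflections (improper motions) are excluded from $SE(3)$, which is exactly what is needed so that chirality is respected; I would remark that distance-based bond perception plus the canonical SMILES handling of stereocenters is consistent with this, so no chiral information is lost or spuriously added.

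For the converse $\mL(\mathcal{M}_1) = \mL(\mathcal{M}_2) \Rightarrow \mathcal{M}_1 \cong_{3D} \mathcal{M}_2$: equal canonical orders give a canonical-index-preserving bijection $b$ between the atoms that matches atom types (atom type is part of what the canonical SMILES encodes). It remains to produce $\tau \in SE(3)$ aligning the coordinates. Here I would invoke the tokenization construction of \cref{sec: frag_tokenization} itself: once a canonical atom order is fixed, the molecule (and each fragment) is assigned an $SE(3)$-equivariant local frame built from its own atoms in that order, and the coordinates are recorded as $SE(3)$-invariant quantities in that frame. Two molecules with the same canonical order therefore have the same invariant coordinate representation iff their coordinate sets differ by the element of $SE(3)$ relating the two local frames; so equality of the canonical order together with the shared geometry forces the existence of such a $\tau$, i.e. $\mathcal{M}_1 \cong_{3D} \mathcal{M}_2$.

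The main obstacle I anticipate is precisely this converse direction, and more specifically making rigorous the claim that ``same canonical SMILES order $\Rightarrow$ same 3D structure up to $SE(3)$.'' This is \emph{not} true for canonical SMILES in general — two distinct 3D conformers of the same molecule share a canonical SMILES but are not $SE(3)$-equivalent — so the statement can only hold because the ``canonical order'' $\mathcal{L}$ in this paper is implicitly the \emph{geometry-aware} order used in the tokenization, not the bare 2D SMILES order. I would therefore either (i) strengthen the notion of canonical order in the lemma statement to include the geometric refinement from \cref{sec: atom ordering,sec: frag_tokenization} (e.g. ties in the 2D canonical rank broken by invariant geometric features, with conformers distinguished by the stored invariant coordinates), and prove the equivalence for that refined order; or (ii) reinterpret $\mL(\mathcal{M})$ as the full invariant sequence (order plus invariant coordinates), under which the equivalence becomes essentially a restatement of the bijectivity of the tokenization map. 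Option (ii) is cleanest and I would pursue it, flagging the abuse of the word ``order'' and citing the $\bm{\epsilon}$-constrained variant of Def.~\ref{def:3d_iso} for the discretized/floating-point version of the claim.
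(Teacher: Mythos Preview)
Your forward direction (3D isomorphic $\Rightarrow$ equal canonical order) matches the paper's approach: both invoke the uniqueness of canonical SMILES for a given molecular structure, though you are more explicit than the paper about bond inference from geometry and about the $SE(3)$ versus $E(3)$ chirality distinction, neither of which the paper mentions.

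For the converse, you diverge substantially from the paper, and for a good reason. The paper's proof is much briefer and hand-wavier than your proposal: it simply asserts that equal canonical SMILES implies the molecules are ``identical in structure,'' and from this immediately claims that one is obtained from the other by an $SE(3)$ transformation, invoking only that $SE(3)$ is a Lie group closed under composition of a rotation and a translation. It does not engage with the conformer objection you raise at all. You are correct that, read literally with $\mL$ as a bare atom ordering, the converse fails: distinct 3D conformers of the same 2D molecule share a canonical SMILES order yet are not $SE(3)$-equivalent. Your option (ii)---folding the invariant coordinate data into $\mL$---is essentially a pre-emption of Theorem~\ref{thm:bijection}, where the paper \emph{does} build the full invariant sequence and proves the bijectivity you want; Lemma~\ref{thm:CL} is really functioning as a stepping stone toward that theorem rather than as a standalone claim about naked SMILES orderings. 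So your instinct to route the converse through the tokenization machinery is the mathematically honest path, but it goes well beyond what the paper actually argues in this lemma's own proof.
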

Lemma~\ref{thm:CL} indicates that the established ordering with canonical SMILES is both complete (sufficient to distinguish non-3D-isomorphic molecules) and sound (not distinguishing actually 3D-isomorphic molecules). Detailed proof is provided in~\cref{app: proof_lemma_3.2}.

\begin{figure}[!t]
\vspace{-0.15in}  
  \centering
  {\includegraphics[width=0.9\textwidth]{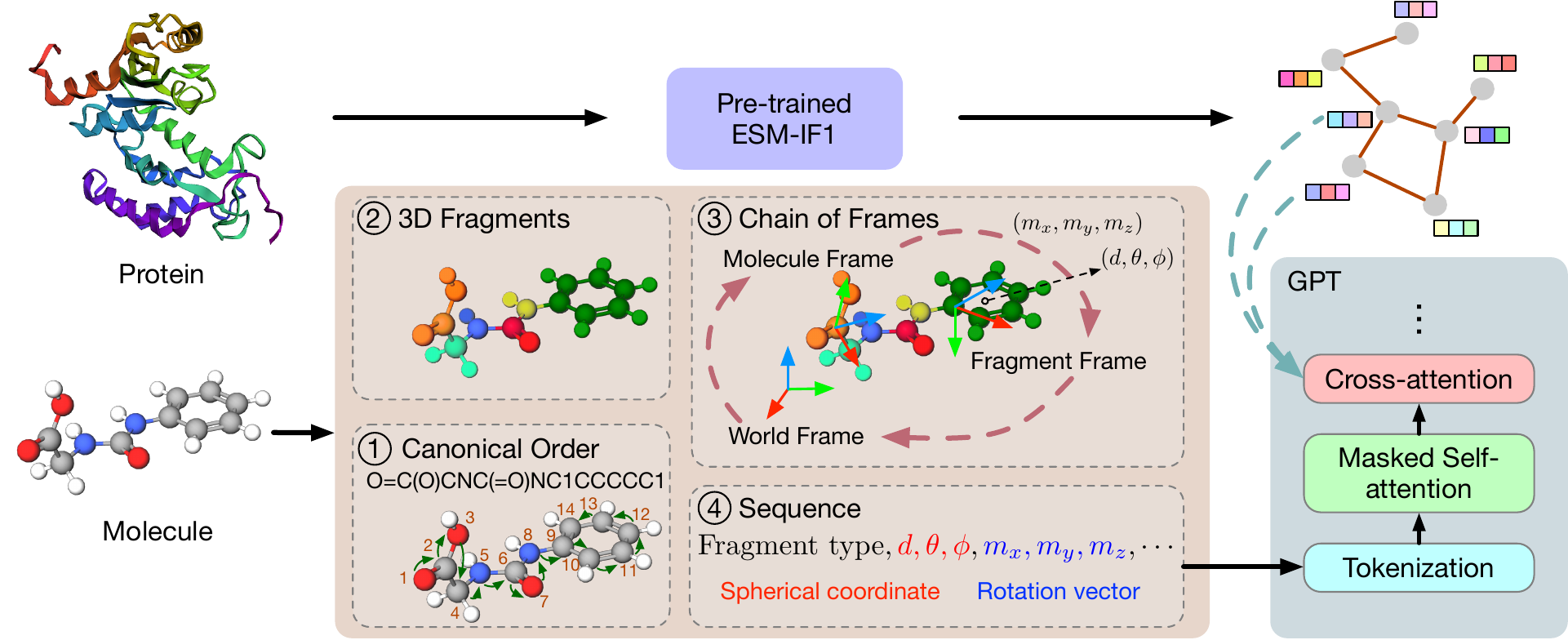}}
  \vspace{-0.1in}
  \caption{Overview of Frag2Seq pipeline. Atoms in the small molecule are reordered according to the order in canonical SMILES. Then, we split the 3D molecules into 3D fragments and sort them based on the canonical order. Next, we construct $SE(3)$-equivariant molecule and fragment frames, which are local coordinate systems for molecules and fragments, respectively. Then, we obtain the $SE(3)$-invariant spherical coordinates $(d, \theta, \phi)$ of each fragment center under the molecule frame and $SE(3)$-invariant rotation vector $(m_x, m_y, m_z)$ between fragment and molecule frames. 
  Lastly, we concatenate them into sequences. Protein node embeddings are obtained using the ESM-IF1 model and incorporated into the language model via cross-attention.}
  \vspace{-0.17in}
  \label{fig:pipeline}
\end{figure}

\subsubsection{$SE(3)$-Equivariant Molecule and Fragment Frames Construction}
\label{sec: frame construction}

To integrate 3D structure information into our sequences, one main challenge is to ensure the $SE(3)$-invariance property. Specifically, given a 3D molecule, if it is rotated or translated in the 3D space, its 3D representation should be unchanged. In order to incorporate invariance, we first need to construct $SE(3)$-equivariant frames for fragments and whole molecules.

Given a 3D molecule $\mathcal{M}$ with atom types $Z$ and atom coordinates $V$, we first build a \textbf{molecule local coordinate frame} $\mathfrak{m}=(\vx,\vy,\vz)$ based on the input molecule. For the fragments $\{\mathcal{G}_i\}_{i=1}^k$ of $\mathcal{M}$, we sort the $k$ fragments based on the canonical order $\mL(\mathcal{M})$, specifically by the order of a fragment's first-ranked atom. We calculate the average atom coordinates as the center of each fragment. 
As shown in Figure~\ref{fig:pipeline}, the frame is built based on the first three non-collinear fragment centers in the canonical order. Let $\ell_1, \ell_2$, and $\ell_m$ be the indices of these three fragment centers. 
Then the molecule local frame $\mathfrak{m}=(\vx, \vy, \vz)$ is calculated as
    \begin{align}
        \vx & = \text{normalize}(\vv_{\ell_2} - \vv_{\ell_1}), \ \ 
        \vy = \text{normalize}\left(\left(\vv_{\ell_m} - \vv_{\ell_1}\right)\times \vx\right), \ \ 
        \vz = \vx \times \vy,
    \end{align}
where $\text{normalize}(\cdot)$ is the function to normalize a vector to unit length. Note that the molecule local frame is equivariant to the rotation and translation of the input molecule.

Given the basis vectors $(\vx, \vy, \vz)$ of the molecule local frame, we can construct the transformation between the molecule local frame $\mathfrak{m}$ and the world frame $\mathfrak{w}$, which we denote as $\mathfrak{m}\rightarrow\mathfrak{w}$. Specifically, the transformation involves a rotation matrix $R_{\mathfrak{m}\rightarrow\mathfrak{w}} \in \mathbb{R}^{3 \times 3}$ with $|R_{\mathfrak{m}\rightarrow\mathfrak{w}}|=1$ and a translation vector $\vt_{\mathfrak{m}\rightarrow\mathfrak{w}} \in \mathbb{R}^{3}$. Since the basis vectors $(\vx, \vy, \vz)$ are already normalized and orthogonal to each other, we can directly stack them together to form the rotation matrix, such that $R_{\mathfrak{m}\rightarrow\mathfrak{w}} = [\vx^T, \vy^T, \vz^T]$. For the translation vector, we can set $\vt_{\mathfrak{m}\rightarrow\mathfrak{w}} = \vv_{\ell_1}$ as it represents the displacement between the origins of the two frames. 


Next, we need to build the local coordinate system for each 3D fragment. Similar to the building process of the molecule local frame, we use the first three non-collinear atoms in a fragment to construct the \textbf{fragment local coordinate frame} and we denote it as $\mathfrak{g}$. The fragment local frame is also equivariant to the rotation and translation of the input molecule. Similarly, we can obtain the rotation matrix $R_{\mathfrak{g}\rightarrow\mathfrak{w}}$ and translation vector $\vt_{\mathfrak{g}\rightarrow\mathfrak{w}}$, which represent the orientation and displacement between the fragment local frame and the world frame, respectively.  

Under the fragment local frame, we can obtain the local coordinates of atoms $V_{\mathcal{G}_i} \in \mathbb{R}^{q \times 3}$ in the $i$-th fragment, where $q$ denotes the number of atoms in a fragment. Then, we save each splited fragment from the training set into a dictionary with the key to be the canonical SMILES of each fragment and the value to be the atom types and atom local coordinates in each fragment under the associated fragment local frame. 


With the transformation from the world frame to the molecule local frame and from the world frame to the fragment local frame, we can derive the transformation between the molecule local frame and the fragment local frame by using homogeneous transformation conversion. Specifically, we construct homogeneous transformation matrices from rotation matrices $R$ and translation vectors $\bm{t}$. Formally, we have 
\begin{align}
    T_{\mathfrak{m}\rightarrow\mathfrak{w}} = 
    \begin{bmatrix}
    R_{\mathfrak{m}\rightarrow\mathfrak{w}} & \bm{t}_{\mathfrak{m}\rightarrow\mathfrak{w}} \\
    \bm{0} & 1 \\
    \end{bmatrix}, \ \ 
    T_{\mathfrak{g}\rightarrow\mathfrak{w}} = 
    \begin{bmatrix}
    R_{\mathfrak{g}\rightarrow\mathfrak{w}} & \bm{t}_{\mathfrak{g}\rightarrow\mathfrak{w}} \\
    \bm{0} & 1 \\
    \end{bmatrix},
\end{align}
where $T \in \mathbb{R}^{4 \times 4}$ denotes the homogeneous transformation matrix and $\bm{0} \in \mathbb{R}^{3 \times 1}$ is a zero vector.

Then, we can use the chain rule of coordinate frame transformations to obtain the homogeneous transformation between molecule and fragment local frames. Formally,
\begin{equation}
\begin{aligned}
    T_{\mathfrak{g}\rightarrow\mathfrak{m}} = T_{\mathfrak{m}\rightarrow\mathfrak{w}}^{-1} T_{\mathfrak{g}\rightarrow\mathfrak{w}} 
    &=
    \begin{bmatrix}
    R_{\mathfrak{m}\rightarrow\mathfrak{w}}^T & -R_{\mathfrak{m}\rightarrow\mathfrak{w}}^T\bm{t}_{\mathfrak{m}\rightarrow\mathfrak{w}} \\
    \bm{0} & 1 \\
    \end{bmatrix}
    \begin{bmatrix}
    R_{\mathfrak{g}\rightarrow\mathfrak{w}} & \bm{t}_{\mathfrak{g}\rightarrow\mathfrak{w}} \\
    \bm{0} & 1 \\
    \end{bmatrix} \\
    &= 
    \begin{bmatrix}
    R_{\mathfrak{m}\rightarrow\mathfrak{w}}^TR_{\mathfrak{g}\rightarrow\mathfrak{w}} & R_{\mathfrak{m}\rightarrow\mathfrak{w}}^T(\bm{t}_{\mathfrak{g}\rightarrow\mathfrak{w}} - \bm{t}_{\mathfrak{m}\rightarrow\mathfrak{w}}) \\
    \bm{0} & 1 \\
    \end{bmatrix}.
\end{aligned}
\end{equation}

From the homogeneous transformation matrix $T_{\mathfrak{g}\rightarrow\mathfrak{m}}$, we can extract the rotation matrix $R_{\mathfrak{g}\rightarrow\mathfrak{m}}$ which will be used to obtain invariant representations as described in~\cref{sec: invariant representations} and the translation vector $\bm{t}_{\mathfrak{g}\rightarrow\mathfrak{m}}$ such that
\begin{align}
    R_{\mathfrak{g}\rightarrow\mathfrak{m}} = R_{\mathfrak{m}\rightarrow\mathfrak{w}}^TR_{\mathfrak{g}\rightarrow\mathfrak{w}},\ \ \
    \bm{t}_{\mathfrak{g}\rightarrow\mathfrak{m}} = R_{\mathfrak{m}\rightarrow\mathfrak{w}}^T(\bm{t}_{\mathfrak{g}\rightarrow\mathfrak{w}} - \bm{t}_{\mathfrak{m}\rightarrow\mathfrak{w}}).
\end{align}

For each fragment in the fragment dictionary, apart from atom types and atom local coordinates, we also need to save the displacement $\bm{t}_{\mathfrak{g}\rightarrow c(\mathcal{G})}$ between the origin of the fragment local frame and the fragment center, which can be calculated via
\begin{align}
    \bm{t}_{\mathfrak{g}\rightarrow c(\mathcal{G})} =
    V^{\mathfrak{m}}_{c(\mathcal{G})} - \bm{t}_{\mathfrak{g}\rightarrow\mathfrak{m}},
\end{align}
where $c(\mathcal{G})$ denotes the center of any fragment $\mathcal{G}$, and $V^{\mathfrak{m}}_{c(\mathcal{G})}$ represents the coordinates of the fragment center under the molecule local frame. $\bm{t}_{\mathfrak{g}\rightarrow c(\mathcal{G})}$ will be used when converting atom local coordinates from fragment local frame back to the world frame, which is described in~\cref{sec: invariant representations}.

\begin{figure}[!t]
  \centering
  {\includegraphics[width=0.8\textwidth]{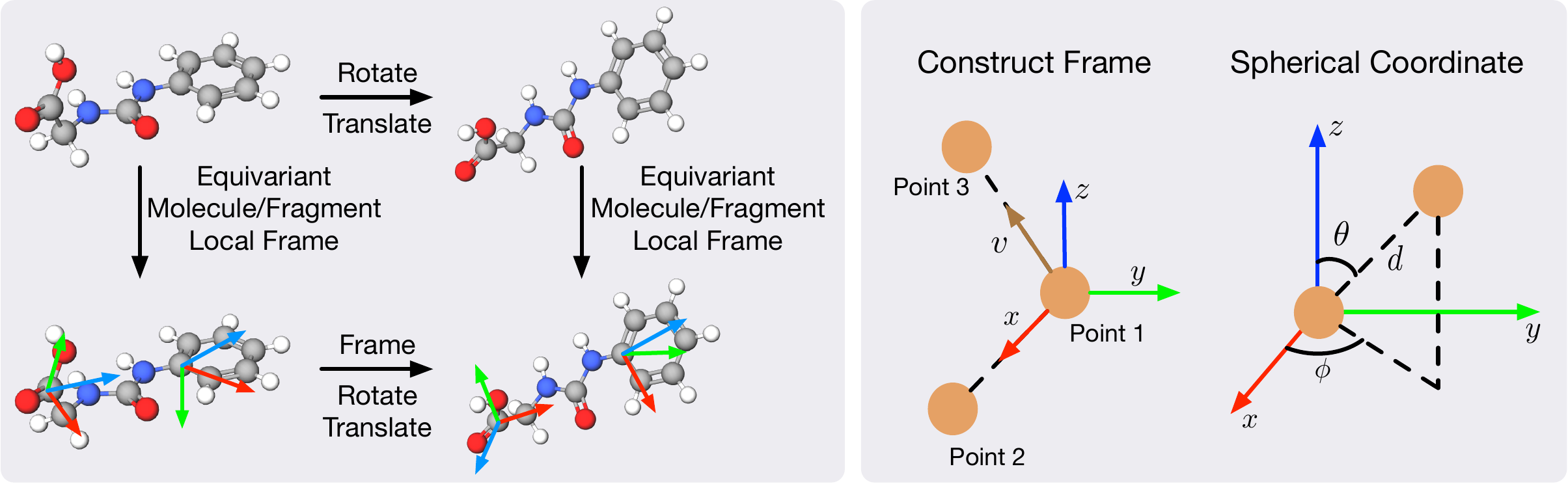}}
  \vspace{-0.1in}
  \caption{Illustraions of equivariant local frames. Left: If the molecule is translated and rotated, the molecule and fragment local frames are transformed accordingly. Right: Local frames are constructed using the first three non-collinear points, and the spherical coordinates are obtained under the local frame. More details are shown in~\cref{sec: frame construction} and~\cref{sec: invariant representations}.}
  \vspace{-0.2in}
  \label{fig:equivariant_frame_SPH}
\end{figure}

\subsubsection{$SE(3)$-Invariant Fragment Local Representations}
\label{sec: invariant representations}


After establishing the frames, we use a function $f(\cdot)$ to convert the coordinates of each fragment center to \textbf{spherical coordinates} $d, \theta, \phi$ under the molecule frame $\mathfrak{m}=(\vx,\vy,\vz)$. 
Specifically, for each fragment center ${\ell_i}$ with coordinate $\vv_{\ell_i}$, the corresponding spherical coordinate is
\begin{equation}
    \begin{aligned}
        d_{\ell_i} & = ||\vv_{\ell_i} - \vv_{\ell_1}||_{2}, \ \ 
        \theta_{\ell_i}  = \arccos\left(\left(\vv_{\ell_i} - \vv_{\ell_1}\right) \cdot \vz / d_{\ell_i}\right), \\
        \phi_{\ell_i} & = \atantwo\left(\left(\vv_{\ell_i} - \vv_{\ell_1}\right) \cdot \vy, \left(\vv_{\ell_i} - \vv_{\ell_1}\right) \cdot \vx \right).
    \end{aligned}
\end{equation}
The spherical coordinates show the relative position of each fragment under the molecule frame $\mathfrak{m}$. As shown in Figure~\ref{fig:equivariant_frame_SPH}, if the molecular coordinates are rotated by a matrix $R$ and translated by a vector $\vt$, the transformed spherical coordinates remain the same, so the spherical coordinates are $SE(3)$-invariant.


Compared to Cartesian coordinates, spherical coordinate values are bounded in a smaller region, namely, a range of $[0,\pi]$/$[0,2\pi]$. Given the same numerical constraints, spherical coordinates require a smaller vocabulary size, and given the same vocabulary size, spherical coordinates present less information loss. This makes spherical coordinates advantageous in discretized representations and thus easier to be modeled by LMs. Experiments also show the superiority of invariant spherical coordinates over invariant Cartesian coordinates, as detailed in Appendix~\ref{appendix:ablation}


Apart from the spherical coordinates of the fragment center point under the molecule local frame, we also need the orientation of the fragment local frame with respect to the molecule local frame. One option is to directly use the rotation matrix $R_{\mathfrak{g}\rightarrow\mathfrak{m}}$ we derived above. However, the rotation matrix representation has redundant information since 3D rotation in Euclidean space only has three degrees of freedom, and it can also unnecessarily increase the context length for LMs. To have a more compact rotation representation, we use a function $g(\cdot, \cdot)$ that takes in both the molecule and fragment local frames to obtain a rotation vector, which indicates the rotation axis and angle. Specifically, we first calculate the rotation matrix $R_{\mathfrak{g}\rightarrow\mathfrak{m}}$ in $g(\cdot, \cdot)$, as described in~\cref{sec: frame construction}, then we can obtain the rotation angle $\psi$ and rotation axis $\bm{a}=(a_x, a_y, a_z)$ from the rotation matrix. Next, we can calculate \textbf{rotation vector} $\bm{m}=(m_x, m_y, m_z)$ via $\bm{m} = \psi \bm{a}$. Detailed derivation of $\psi$ and $\bm{a}$ can be found in~\cref{app: rot_angle_axis}.



Conversely, given the spherical coordinates of the fragment centers under the molecule local frame and the rotation vector $\bm{m}$ between the fragment and molecule local frames, we can derive the atom coordinates in the world frame. Specifically, we first convert spherical coordinates to Cartesian coordinates $V^\mathfrak{m}_{c(\mathcal{G})}$ and transform the rotation vector back to the rotation matrix $R_{\mathfrak{g}\rightarrow\mathfrak{m}}$. Then, we can use coordinate transformation to convert atom coordinates from the fragment local frame to the world frame. Formally,  
\begin{align}
\label{eq:convert_coord_backward}
    \bm{t}_{\mathfrak{g}\rightarrow\mathfrak{m}} = V^\mathfrak{m}_{c(\mathcal{G})} - \bm{t}_{\mathfrak{g}\rightarrow c(\mathcal{G})}, \ \ 
    V^\mathfrak{m} = V^\mathfrak{g} R^T_{\mathfrak{g}\rightarrow\mathfrak{m}} + \bm{t}_{\mathfrak{g}\rightarrow\mathfrak{m}}, \ \ 
    V^\mathfrak{w} = V^\mathfrak{m} R^T_{\mathfrak{m}\rightarrow\mathfrak{w}} + \bm{t}_{\mathfrak{m}\rightarrow\mathfrak{w}}.
\end{align}
where $V^{(\cdot)}$ denotes atom coordinates under a certain coordinate frame and $\bm{t}_{\mathfrak{g}\rightarrow c(\mathcal{G})}$ can be retrieved from the fragment dictionary built in~\cref{sec: frame construction}. As for $R_{\mathfrak{m}\rightarrow\mathfrak{w}}$ and $\bm{t}_{\mathfrak{m}\rightarrow\mathfrak{w}}$, they are associated with the reference molecule for a given protein pocket when we construct the molecule local frame, and we can still use them when we generate new molecules for the same protein pocket.

Formally, with the $SE(3)$-equivariant local frames constructed in~\cref{sec: frame construction}, we have the following properties of local representations built in this section. 

\begin{lemma}\label{thm:invariant_sph_rot}
    Let $\mathcal{M}=(V, Z)$ be a 3D molecular graph with node type matrix $Z$ and node coordinate matrix $V$. Let $\mathfrak{m}$ be equivariant local frames of $\mathcal{M}$ built based on the first three non-collinear fragment centers in $\mL(\mathcal{M})$ and $\mathfrak{g}$ be equivariant local frames of any fragment $\mathcal{G}_i$ built based on the first three non-collinear atoms in $\mL(\mathcal{G}_i)$.
    $f(\cdot)$ is our function that maps 3D coordinate matrix $V$ of $\mathcal{M}$ to spherical representations $\mS$ under the molecule local frame $\mathfrak{m}$. $g(\cdot,\cdot)$ is the function that maps the molecule local frame $\mathfrak{m}$ and fragment local frame $\mathfrak{g}$ to rotation vectors $\bm{m}$. Then for any 3D transformation $\tau\in SE(3)$, we have $f(V) = f(\tau(V))$ and $g(\mathfrak{m}, \mathfrak{g}) = g(\tau(\mathfrak{m}, \mathfrak{g}))$.
    Given spherical representations $\mS=f(V)$ and rotation vectors $\bm{m} = g(\mathfrak{m}, \mathfrak{g})$, there exist a transformation $\tau\in SE(3)$, such that $f^{-1}(\mS) = \tau(V)$ and $g^{-1}(\bm{m}) = \tau(\mathfrak{m}, \mathfrak{g})$.
\end{lemma}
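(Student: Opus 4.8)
\textbf{Proof proposal for Lemma~\ref{thm:invariant_sph_rot}.}

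The plan is to split the claim into two symmetric parts --- an invariance statement and an invertibility statement --- and within each, to treat the spherical-coordinate map $f$ and the rotation-vector map $g$ separately, since they rely on the same underlying fact: the molecule local frame $\mathfrak{m}$ (and each fragment frame $\mathfrak{g}$) is $SE(3)$-equivariant by construction. First I would record this equivariance precisely: if $\tau\in SE(3)$ acts as $\vv\mapsto R\vv+\vt$ with $R\in SO(3)$, then applying $\tau$ to all atom coordinates sends $\vv_{\ell_1}\mapsto R\vv_{\ell_1}+\vt$ and sends the basis vectors $(\vx,\vy,\vz)$ of $\mathfrak{m}$ to $(R\vx,R\vy,R\vz)$ --- this follows because $\vx=\text{normalize}(\vv_{\ell_2}-\vv_{\ell_1})$ is built from a \emph{difference} of coordinates (translation cancels) and normalization commutes with the orthogonal $R$; similarly for $\vy$ via the cross product (using $R\va\times R\vb = R(\va\times\vb)$ for $R\in SO(3)$) and for $\vz$. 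The same argument applies verbatim to each fragment frame $\mathfrak{g}$. Hence $R_{\mathfrak{m}\to\mathfrak{w}}\mapsto R\,R_{\mathfrak{m}\to\mathfrak{w}}$ and $\vt_{\mathfrak{m}\to\mathfrak{w}}\mapsto R\vt_{\mathfrak{m}\to\mathfrak{w}}+\vt$, and likewise for $\mathfrak{g}$.

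For the invariance of $f$: each spherical coordinate is a function of the dot products $(\vv_{\ell_i}-\vv_{\ell_1})\cdot\vx$, $(\vv_{\ell_i}-\vv_{\ell_1})\cdot\vy$, $(\vv_{\ell_i}-\vv_{\ell_1})\cdot\vz$ and the norm $\|\vv_{\ell_i}-\vv_{\ell_1}\|_2$. Under $\tau$, the difference vector becomes $R(\vv_{\ell_i}-\vv_{\ell_1})$ and each basis vector $\vx$ becomes $R\vx$, so every dot product is preserved because $R(\va)\cdot R(\vb)=\va\cdot\vb$, and the norm is preserved because $R$ is orthogonal; therefore $d,\theta,\phi$ are unchanged, giving $f(V)=f(\tau(V))$. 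For the invariance of $g$: I would use the already-derived formula $R_{\mathfrak{g}\to\mathfrak{m}}=R_{\mathfrak{m}\to\mathfrak{w}}^T R_{\mathfrak{g}\to\mathfrak{w}}$; under $\tau$ this becomes $(R\,R_{\mathfrak{m}\to\mathfrak{w}})^T(R\,R_{\mathfrak{g}\to\mathfrak{w}}) = R_{\mathfrak{m}\to\mathfrak{w}}^T R^T R\, R_{\mathfrak{g}\to\mathfrak{w}} = R_{\mathfrak{m}\to\mathfrak{w}}^T R_{\mathfrak{g}\to\mathfrak{w}}$, so the relative rotation $R_{\mathfrak{g}\to\mathfrak{m}}$ is invariant, and hence so are its angle $\psi$ and axis $\va$ (extracted deterministically from $R_{\mathfrak{g}\to\mathfrak{m}}$ as in~\cref{app: rot_angle_axis}), and therefore $\bm{m}=\psi\va$ is invariant.

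For the invertibility part, I would exhibit the reconstruction explicitly rather than argue abstractly. Given $\mS=f(V)$ and $\bm{m}=g(\mathfrak{m},\mathfrak{g})$, the map from $(d,\theta,\phi)$ back to a Cartesian vector $V^{\mathfrak{m}}_{c(\mathcal{G})}$ expressed in the frame $\mathfrak{m}$ is the standard spherical-to-Cartesian formula, and $\bm{m}\mapsto R_{\mathfrak{g}\to\mathfrak{m}}$ is the Rodrigues exponential (inverse of the axis--angle extraction); composing with the stored displacements $\bm{t}_{\mathfrak{g}\to c(\mathcal{G})}$ and with $R_{\mathfrak{m}\to\mathfrak{w}}, \bm{t}_{\mathfrak{m}\to\mathfrak{w}}$ via~\eqref{eq:convert_coord_backward} recovers atom coordinates that, by the invariance just proved, must equal $\tau(V)$ for the particular $\tau$ relating the chosen reference frame $\mathfrak{m}$ to the world frame --- i.e. $\tau = (\,\vv\mapsto R_{\mathfrak{m}\to\mathfrak{w}}\vv+\vt_{\mathfrak{m}\to\mathfrak{w}}\,)$. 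The main thing to be careful about is that ``$\tau$ exists'' is asserting reconstruction \emph{up to} a global rigid motion --- the information discarded in going to invariant coordinates is exactly the pose of $\mathfrak{m}$ in the world --- so I would state clearly that $f^{-1}$ and $g^{-1}$ are well-defined as maps into equivalence classes of configurations modulo $SE(3)$, and that choosing any representative of that class (equivalently, any lift of $\mathfrak{m}$ to $\mathfrak{w}$) fixes $\tau$. A secondary subtlety is that the axis--angle extraction is genuinely a bijection only after restricting to $\psi\in[0,\pi]$ and handling the $\psi=\pi$ ambiguity (antipodal axes) and the $\psi=0$ degeneracy (axis undefined but $\bm m=\vzero$ regardless); I would note these are measure-zero edge cases handled by the convention in~\cref{app: rot_angle_axis}, and similarly that $\theta\in\{0,\pi\}$ makes $\phi$ irrelevant but does not affect the reconstructed point. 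The non-collinearity hypotheses on the three reference centers/atoms guarantee $\mathfrak{m}$ and $\mathfrak{g}$ are well-defined orthonormal frames in the first place, so no further case analysis is needed there.
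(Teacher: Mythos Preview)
Your proposal is correct and follows essentially the same approach as the paper's proof: both first establish the $SE(3)$-equivariance of the constructed frames $(\vx,\vy,\vz)\mapsto(R\vx,R\vy,R\vz)$, then deduce invariance of $f$ from preservation of the relevant dot products and norms, invariance of $g$ from the identity $(RR_{\mathfrak{m}\to\mathfrak{w}})^T(RR_{\mathfrak{g}\to\mathfrak{w}})=R_{\mathfrak{m}\to\mathfrak{w}}^TR_{\mathfrak{g}\to\mathfrak{w}}$, and finally exhibit explicit reconstruction via spherical-to-Cartesian and Rodrigues. Your treatment is in fact slightly more careful than the paper's, since you flag the axis--angle degeneracies at $\psi\in\{0,\pi\}$ and the $\phi$-irrelevance at $\theta\in\{0,\pi\}$, which the paper's proof does not address.
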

Lemma~\ref{thm:invariant_sph_rot} shows that our spherical and rotation vector representations are $SE(3)$-invariant, and we can reconstruct original molecules from those representations, differing only by a $SE(3)$ transformation. Detailed proof is provided in~\cref{app:proof2}.

\subsubsection{Frag2Seq: Fragment and Geometry Aware Tokenization}
With ordering that reduces molecule structures to 1D sequences and $SE(3)$-invariant spherical and rotation representations, both ensuring minimum 3D information loss, we develop a reversible transformation from 3D molecular fragments to 1D sequences. 
Figure~\ref{fig:pipeline} shows an overview of our method, which we term as Frag2Seq.
Specifically, given a molecule $\mathcal{M}$ with $k$ fragments, we concatenate the fragment-position vector $[s_i,d_i,\theta_i,\phi_i, m_{xi}, m_{yi}, m_{zi}]$ of every fragment $\mathcal{G}_i$ in $\mathcal{M}$ into a 1D sequence by its canonical order, $\ell_1,\cdots,\ell_k$, where $s_i$ is the canonical SMILES string of $\mathcal{G}_i$.
To define the properties of Frag2Seq, we formulate Frag2Seq and our major theoretical derivations below.
\begin{theorem}\label{thm:bijection}[Bijective Mapping]
Following Def.~\ref{def:3d_iso}, let \( \mathcal{M}_1 = (V_1, Z_1) \) and \( \mathcal{M}_2 = (V_2,Z_2) \) be two 3D molecular graphs. 
Let $\mL(\mathcal{M})$ be the canonical order for $\mathcal{M}$ and \( f(\cdot) \) and \( g(\cdot,\cdot) \) be the functions following Lemma~\ref{thm:invariant_sph_rot}.
Let the fragment-position vector for a molecule fragment $\mathcal{G}_i$ be $\vx^*_{i} = [s_i,d_i,\theta_i,\phi_i, m_{xi}, m_{yi}, m_{zi}]$, where the vector elements are derived from \( f(\cdot) \) and \( g(\cdot,\cdot) \) as predefined. 
For the fragments $\{\mathcal{G}_i\}_{i=1}^k$ of $\mathcal{M}$, we construct canonical order $\mL(\mathcal{M})$ for the $k$ fragments, $\ell_1,\cdots,\ell_k$, specifically by the order of a fragment's first-ranked atom.
We define $\text{Frag2Seq}: \EuScript{M} \rightarrow \mathcal{U}$, which maps a molecule \( \mathcal{M} \in \EuScript{M} \) to its sequence representation \( U \in \mathcal{U} \), the set of all possible sequence representations, as
$$\text{Frag2Seq}(\mathcal{M})=\text{concat}(\vx^*_{\ell_1},\cdots,\vx^*_{\ell_k}),$$
where $\text{concat}(\cdot)$ concatenates elements as a sequence.
Then $\text{Frag2Seq}(\cdot)$ is a surjective function, and the following equivalence holds:
\[ \text{Frag2Seq}(\mathcal{M}_1) = \text{Frag2Seq}(\mathcal{M}_2) \Leftrightarrow \mathcal{M}_1 \cong_{3D} \mathcal{M}_2, \]
where \( \mathcal{M}_1 \cong_{3D} \mathcal{M}_2 \) denotes \( \mathcal{M}_1 \) and \( \mathcal{M}_2 \) are 3D isomorphic.
If we allow rounding up spherical coordinate and rotation vector values to $\geq b$ decimal places, then the surjectivity and equivalence still hold, only molecules \( \mathcal{M}_1 \) and \( \mathcal{M}_2 \) are $(|10^{-b}|/2)$-constrained 3D isomorphic.
\end{theorem}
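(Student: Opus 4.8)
The plan is to establish all three assertions together by constructing an explicit decoder $\text{Seq2Frag}:\mathcal{U}\rightarrow\EuScript{M}$ (valued in 3D molecular graphs modulo a fixed reference frame) and showing that it is a two-sided inverse of $\text{Frag2Seq}$ up to $SE(3)$; the claimed equivalence then drops out by plugging Lemma~\ref{thm:CL} and Lemma~\ref{thm:invariant_sph_rot} into the two directions separately. Concretely, $\text{Seq2Frag}$ parses $U=\text{concat}(\vx^*_{\ell_1},\dots,\vx^*_{\ell_k})$ into per-fragment blocks, reads each fragment SMILES $s_i$, looks up its atom types, fragment-local coordinates $V^{\mathfrak{g}}$ and offset $\bm{t}_{\mathfrak{g}\rightarrow c(\mathcal{G})}$ in the dictionary of \cref{sec: frame construction} (treating fragments as rigid bodies whose internal geometry is fixed by this dictionary), converts $(d_i,\theta_i,\phi_i)$ to the Cartesian center $V^{\mathfrak{m}}_{c(\mathcal{G}_i)}$, converts $\bm{m}_i$ back to $R_{\mathfrak{g}\rightarrow\mathfrak{m}}$ via the matrix exponential, and applies the chain of frame changes in \cref{eq:convert_coord_backward} to place every atom in the world frame; since the canonical orders of the fragments and of the atoms within each fragment are recorded in $U$, the reassembled molecule is well defined up to the fixed choice of molecule frame, i.e.\ up to an element of $SE(3)$.

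For the implication $\mathcal{M}_1\cong_{3D}\mathcal{M}_2\Rightarrow\text{Frag2Seq}(\mathcal{M}_1)=\text{Frag2Seq}(\mathcal{M}_2)$, I would start from Lemma~\ref{thm:CL}, which gives $\mL(\mathcal{M}_1)=\mL(\mathcal{M}_2)$, so the bijection $b$ of Def.~\ref{def:3d_iso} respects the canonical atom orders. The fragmentation rule of \cref{sec: fragmentation} is purely graph-theoretic (non-ring, single bond, both endpoints of degree $>1$), hence preserved by $b$; thus the two molecules split into the same number $k$ of fragments, corresponding fragments inherit a 3D isomorphism from $\tau$ restricted to their atoms, so by Lemma~\ref{thm:CL} their canonical orders — and hence the strings $s_i$ — coincide, and sorting fragments by their first-ranked atoms yields the same order. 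For the geometric entries I would note that the fragment center (a coordinate average), the property of being the \emph{first three non-collinear} points, and the cross-product frame construction all commute with every $\tau\in SE(3)$; combining this with the $SE(3)$-invariance of $f$ and $g$ from Lemma~\ref{thm:invariant_sph_rot} shows the tuples $(d_i,\theta_i,\phi_i,m_{xi},m_{yi},m_{zi})$ agree, so the concatenated sequences are identical.

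For the converse and for surjectivity I would run $\text{Seq2Frag}$. If $\text{Frag2Seq}(\mathcal{M}_1)=\text{Frag2Seq}(\mathcal{M}_2)$, equal SMILES strings give identical fragment connectivity and atom types and, through the dictionary, identical fragment-local coordinates; equal spherical and rotation-vector entries give, via $f^{-1},g^{-1}$ of Lemma~\ref{thm:invariant_sph_rot} and the formulas of \cref{eq:convert_coord_backward}, identical world-frame atom coordinates for the two reconstructions relative to one common molecule frame; hence $\mathcal{M}_1$ and $\mathcal{M}_2$ coincide after a single $SE(3)$ transformation, i.e.\ $\mathcal{M}_1\cong_{3D}\mathcal{M}_2$. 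Surjectivity is then immediate: $\text{Seq2Frag}$ is total on $\mathcal{U}$ and $\text{Frag2Seq}\circ\text{Seq2Frag}=\text{id}$, because each step of $\text{Frag2Seq}$ is literally reversed.

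The rounding refinement needs one quantitative addition, which I expect to be the main obstacle — bookkeeping rather than a new idea. Rounding $d,\theta,\phi,m_x,m_y,m_z$ to (at least) $b$ decimal places perturbs each value by at most $10^{-b}/2$; I would push this error through the locally Lipschitz spherical-to-Cartesian and rotation-vector-to-matrix-exponential maps to bound the displacement of every reconstructed atom, and then argue that two molecules with the same rounded sequence differ, after a common $SE(3)$ transform, by at most this bound in each coordinate — the $\bigl(|10^{-b}|/2\bigr)$-constrained 3D isomorphism of Def.~\ref{def:3d_iso}, where care is needed to absorb the Lipschitz constants (governed by fragment radii and molecular diameter) into that clean bound. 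The forward direction is unchanged, since 3D-isomorphic molecules already yield identical unrounded and hence identical rounded sequences. Finally I would dispatch the degenerate inputs separately — fragments with fewer than three atoms or with all atoms collinear, molecules with fewer than three fragments, and the $d=0$ spherical singularity — by restricting, as Lemma~\ref{thm:invariant_sph_rot} already implicitly does, to inputs on which the \emph{first three non-collinear} selections exist.
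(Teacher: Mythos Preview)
Your proposal is correct and rests on the same two pillars the paper uses---Lemma~\ref{thm:CL} for the canonical ordering and Lemma~\ref{thm:invariant_sph_rot} for $SE(3)$-invariance and invertibility of $f,g$---but the organization differs. The paper first extracts an intermediate lemma (Lemma~B.1 in the appendix) stating that the per-fragment spherical and rotation data agree for all $i$ iff $\mathcal{M}_1\cong_{3D}\mathcal{M}_2$, and then proves the left-to-right implication by contradiction, splitting into the cases ``not isomorphic in $Z$'' (so some $s_{\ell_i}\neq s_{\ell'_i}$) versus ``isomorphic in $Z$ but geometrically different'' (so some geometric tuple differs via Lemma~B.1). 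You instead construct the explicit decoder $\text{Seq2Frag}$ and read off both surjectivity and the reverse implication directly from $\text{Frag2Seq}\circ\text{Seq2Frag}=\mathrm{id}$; this is more constructive and avoids the auxiliary lemma, at the cost of relying on the fragment dictionary to supply the internal geometry, which the paper's argument also implicitly needs. On the rounding refinement you are in fact more careful than the paper: the paper simply declares the round-off error to be at most $|10^{-b}|/2$ and inserts it into the $\epsilon$-constrained definition without tracking error propagation through $f^{-1}$ and $g^{-1}$, so the Lipschitz bookkeeping you flag as the main obstacle is not carried out there either.
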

Theorem~\ref{thm:bijection} establishes guarantees that we can uniquely construct a 1D sequence given a 3D molecule using Frag2Seq, and uniquely reconstruct a 3D molecule given a sequence output of Frag2Seq. Furthermore, two sequence outputs from Frag2Seq are identical if and only if the two corresponding molecules are 3D isomorphic. This enables sequential tokenization of 3D molecules while preserving structural completeness and geometric invariance. Due to the necessity of discreteness in tokenization for LMs, in reality, numerical values need to be discretized before concatenation. In practice, we round up numerical values to certain decimal places. In this case, Theorem~\ref{thm:bijection} also extends guarantees for the practical use of Frag2Seq. If we allow a round-up error below $|10^{-b}|/2$ for coordinates when distinguishing 3D isomorphism, all properties still hold. This implies that the practical Frag2Seq implementation retains near-complete geometric information and invariance, with numerical precision of $\bm{\epsilon}\leq|10^{-b}|/2$.

With discreteness incorporated, we can collect a finite vocabulary covering all accessible fragment samples to enable tokenization for LMs.
Specifically, we use vocabularies consisting of fragment type tokens such as $O, O=C,c1ccccc1,\cdots$, and numerical tokens such as $-8.126$, $0.05^{\circ}$ or $-1.702^*$. The numerical tokens range from the smallest to the largest distance, angle, and rotation values with restricted precision of 2 or 3 decimal places. 


\subsection{Conditional Training and Generation}
\label{sec: training_generation}

After we obtain the sequence representation of 3D molecules $U = \{u_1, \cdots, u_n\}$, we need to model the distribution of such 3D geometry-aware sequences. We adopt GPT~\citep{radford2018improving} as the base model to learn the distribution. To incorporate the protein pocket condition, we use pre-trained inverse folding model ESM-IF1~\citep{hsu2022learning} to obtain node embeddings of the protein backbone. Then, we add cross-attention between protein node embeddings and ligand token embeddings after the multi-head self-attention in each attention block, in which queries are from ligand token embedding and keys and values are from protein node embedding. For training, we use the standard next-token prediction cross-entropy loss to maximize the following likelihood:
\begin{align}
    \mathcal{L}(U) = \sum_i \log p_{\theta}(u_i|u_{i-1}, \cdots, u_1).
\end{align}
To generate molecules conditioned on a protein pocket, we first need to sample initial fragment tokens from the first-token distribution in the training set. Then, we can generate the following tokens autoregressively by sampling from the conditional distribution $p_{\theta}(u_i|u_{i-1}, \cdots, u_1)$ and we stop the generation when the maximum length is reached or the ending token is sampled. 

\begin{figure}[t]
  \centering
  {\includegraphics[width=\textwidth]{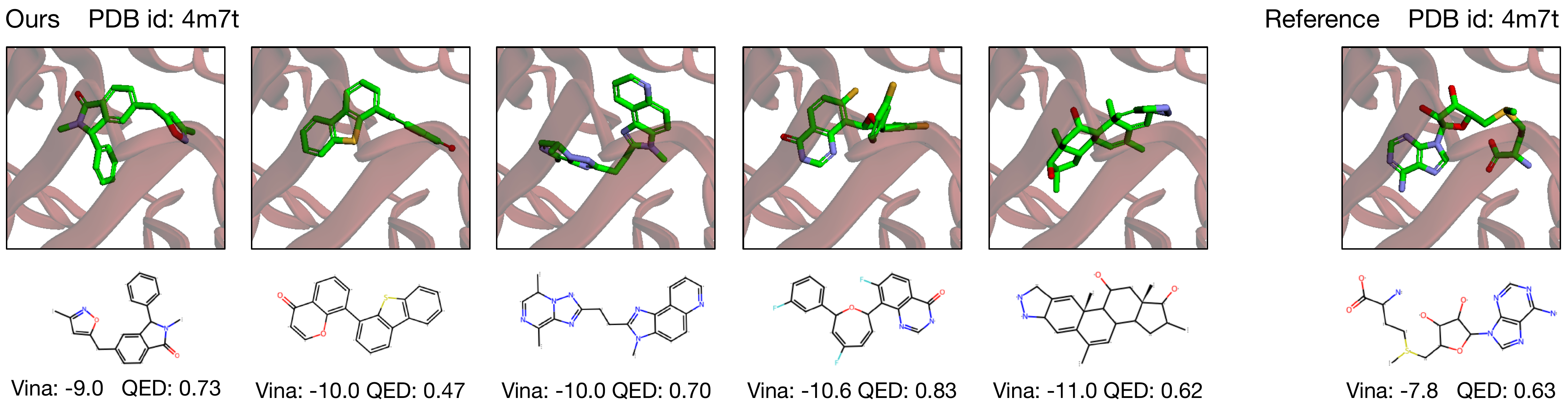}}
  \vspace{-0.25in}
  \caption{Visualization of our generated 3D molecules that have higher binding affinity than the reference molecule in the test set. A higher QED indicates more drug-likeness, and a lower Vina score indicates higher binding affinity.}
  \vspace{-0.25in}
  \label{fig: visualization}
\end{figure}
\section{Experiments}
We empirically show the effectiveness of our method in generating ligands for protein pockets. In~\cref{sec: exp_setup}, we describe the experimental setup, including dataset, baselines, and evaluation metrics. Then, in~\cref{sec: main_result}, we present our results about binding affinity, chemical properties, efficiency, and structural distribution. Due to the limited space, we conduct ablation studies on several design choices in~\cref{appendix:ablation}, and more results about efficiency and structural analysis in~\cref{app: efficiency_param_memory,app: structural_analysis}.

\subsection{Setup}
\label{sec: exp_setup}
\textbf{Dataset. } Following~\citet{schneuing2022structure}, we use the CrossDocked dataset~\citep{francoeur2020three} that is further curated by previous work~\citep{luo20213d}. More details about the dataset is provided in~\cref{app: dataset}.

\textbf{Baselines. } We compare with recent state-of-the-art methods in structure-based drug design. 3D-SBDD~\citep{luo20213d}, Pocket2Mol~\citep{peng2022pocket2mol}, and GraphBP~\citep{liu2022generating} generate atoms in an autoregressive scheme. TargetDiff~\citep{guan20223d}, DecompDiff~\citep{guan2023decompdiff}, and DiffSBDD~\citep{schneuing2022structure} are diffusion-based methods that generate all atoms in one shot. FLAG~\citep{zhang2022molecule} and DrugGPS~\citep{zhang2023learning} generate molecules motif-by-motif. 
Lingo3DMol~\citep{feng2024generation} uses a language model to generate fragment-based SMILES and then predict coordinates. The language model is pre-trained on $12$ million 3D molecules using a denoising strategy.

\textbf{Evaluation Metrics. } We adopt commonly-used metrics used in previous work~\citep{schneuing2022structure, peng2022pocket2mol} to evaluate the quality of our generated molecules: (1) \textbf{Vina Score} estimates the binding affinity between generated molecules and given protein pockets; (2) \textbf{High Affinity} measures the percentage of generated molecules that have higher binding affinity than the reference molecule for a certain protein; (3) \textbf{QED} is a measure used to assess the drug-likeness of a molecule based on its molecular properties; (4) \textbf{SA} estimates how easy it would be to synthesize a given chemical compound; (5) \textbf{Lipinski} measures how well a molecule satisfies the Lipinski's rule of five~\citep{lipinski2012experimental}, which evaluates the drug-likeness of a molecule; (6) \textbf{Diversity} measures the average pairwise diversity (calculated as $1 - \mathrm{Tanimoto \ similarity}$) of generated molecules for a binding pocket; (7) \textbf{Time} is the average time cost to generate 100 molecules for a protein pocket in the test set. All the Vina scores are calculated by QVina~\citep{alhossary2015fast}, and the chemical properties are calculated by RDKit~\citep{bento2020open}.

\begin{table}[t]
    \vspace{-0.15in}
  \caption{Summary of molecular properties for generated molecules by our method and other baselines. * denotes the property results taken from the DiffSBDD paper. $\dagger$ denotes that we re-evaluate the properties of generated molecules of those baselines using the DiffSBDD evaluation code for a fair comparison. The best results are annotated in bold.}
  \vspace{-0.1in}
  \label{sample-table}
  \centering
  \resizebox{\textwidth}{!}{
  \begin{tabular}{lccccccc}
    \toprule
    Methods     & Vina Score ($\downarrow$) & High Affinity ($\uparrow$) & QED ($\uparrow$) & SA ($\uparrow$) & Lipinski ($\uparrow$) & Diversity ($\uparrow$) & Time (s, $\downarrow$) \\
    \midrule
    Test set & $-6.871 \pm 2.32$ & $-$ & $0.476 \pm 0.20$ & $0.728 \pm 0.14$ & $4.340 \pm 1.14$ & $-$ & $-$ \\  
    \midrule
    3D-SBDD* & $-5.888 \pm 1.91$ & $0.364 \pm 0.31$ & $0.502 \pm 0.17$ & $0.675 \pm 0.14$ & $4.787 \pm 0.51$ & $0.742 \pm 0.09$  & $15986.4 \pm 9851.0$ \\
    Pocket2Mol* & $-7.058 \pm 2.80$ & $0.515 \pm 0.31$ & $0.572 \pm 0.16$ & $\mathbf{0.752 \pm 0.12}$ & $4.936 \pm 0.27$ & $0.735 \pm 0.15$ & $2827.3 \pm 1456.8$\\
    GraphBP* &  $-4.719 \pm 4.03$ & $0.183 \pm 0.21$ & $0.502 \pm 0.12$ & $0.307 \pm 0.09$ & $4.883 \pm 0.37$ & $\mathbf{0.844 \pm 0.01}$  & $1162.8 \pm 438.5$\\
    TargetDiff* & $-7.318 \pm 2.47$ & $0.581 \pm 0.31$ & $0.483 \pm 0.20$ & $0.584 \pm 0.13$ & $4.594 \pm 0.83$ & $0.718 \pm 0.09$ & $\sim 3428$\\
    DecompDiff$\dagger$ & $-6.607 \pm 2.11$ & $0.423 \pm 0.25$ & $0.496 \pm 0.21$ & $0.659 \pm 0.14$ & $4.493 \pm 1.02$ & $0.722 \pm 0.10$ & $\sim 6189$\\
    DiffSBDD* & $-7.177 \pm 3.28$ & $0.499 \pm 0.30$ & $0.556 \pm 0.20$ & $0.729 \pm 0.12$ & $4.742 \pm 0.59$ & $0.718 \pm 0.07$ & $629.9 \pm 277.2$\\
    FLAG$\dagger$ & $-6.389 \pm 3.24$ & $0.478 \pm 0.34$ & $0.487 \pm 0.19$ & $0.702 \pm 0.15$ & $4.656 \pm 0.74$ & $0.701 \pm 0.14$ & $1289.1 \pm 378.0$\\
    DrugGPS$\dagger$ & $-6.608 \pm 2.38$ &  $0.421 \pm 0.24$ & $0.467 \pm 0.21$ & $0.628 \pm 0.15$ & $4.495 \pm 0.99$ & $0.738 \pm 0.10$ & $1007.8 \pm 554.1$\\
    Lingo3DMol$\dagger$ & $-7.257 \pm 1.69$ & $0.625 \pm 0.36$ & $0.269 \pm 0.15$ & $0.656 \pm 0.08$ & $3.121 \pm 1.25$ & $0.480 \pm 0.12$ & $1481.9 \pm 1512.8$ \\
    Frag2Seq & $\mathbf{-7.366 \pm 1.96}$ & $\mathbf{0.653 \pm 0.33}$ & $\mathbf{0.645 \pm 0.15}$ & $0.642 \pm 0.11$ & $\mathbf{4.989 \pm 0.11}$ & $0.711 \pm 0.07$ & $\mathbf{48.8 \pm 14.6}$ \\
    \bottomrule
  \end{tabular}
  }
  \vspace{-0.3in}
  \label{tab: main_result}
\end{table}


\subsection{Results}
\label{sec: main_result}
\begin{wrapfigure}{r}{0.6\textwidth}
\begin{center}
\vskip -0.2in
\centerline{\includegraphics[width=0.6\textwidth]{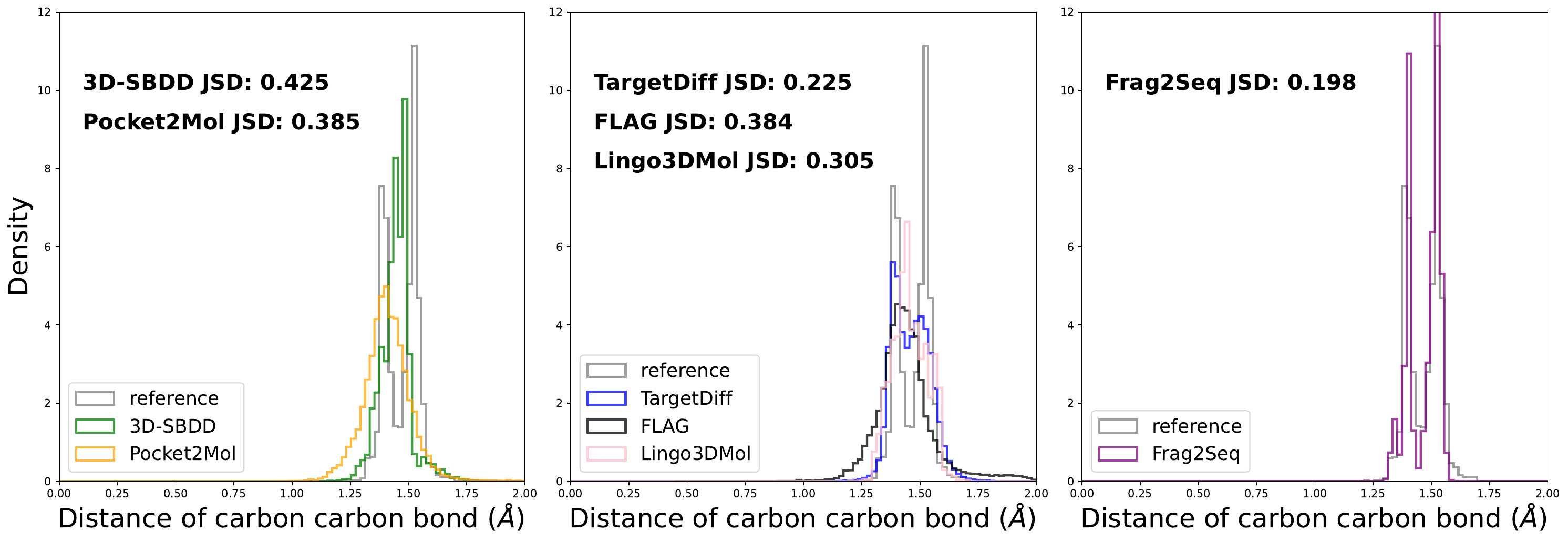}}
\vspace{-0.1in}
\caption{Comparison of the distribution of carbon-carbon bond distance. Jensen-Shannon divergence (JSD) between reference molecules and generated molecules is reported.}
\vskip -0.2in
\label{fig: distance_distribution}
\end{center}
\end{wrapfigure}
We show the results of the above metrics for our method and baselines in~\cref{tab: main_result}. The results show that our method can achieve better or competitive performance with other baseline methods. Specifically, 
our method achieves the best result on QED and Lipinski, which indicates that our generated molecules possess better drug characteristics. Additionally, our method obtains the best Vina score and high affinity metric, which shows that our model captures more accurate interactions between ligands and protein pockets to improve the binding affinity. Moreover, our method has much better sampling efficiency than the baselines due to our simple generation pipeline and fragment-based generation strategy. 

In~\cref{fig: visualization}, we provide several examples of our generated 3D molecules for a certain protein pocket (PDB id 4m7t). The reference molecule is provided in the test set, and our generated molecules exhibit higher binding affinity than the reference molecule. These examples can validate that our method has the capability to capture the complex interaction between proteins and ligands in order to generate novel molecules that have higher binding affinity while retaining similar or even better drug-likeness properties than the reference molecule. 
In~\cref{fig: distance_distribution}, we plot the empirical distribution of carbon-carbon bond distances of generated molecules and reference molecules. The reference distribution (gray) has two modes, but most methods can only capture one mode due to mode collapse. Even though TargetDiff exhibits two modes in the distribution, it still suffers from the over-smoothness issue. Comparatively, our method can better capture those two modes in the reference distribution. Furthermore, we calculate Jensen-Shannon divergence (JSD)~\citep{lin1991divergence} between the bond distance distribution of reference and generated molecules for each method to quantitatively evaluate the distribution fitting capability. We can observe that our method outperforms other methods by a clear margin.


\section{Conclusions, Broader Impacts, and Limitations}
\label{sec: conclution}
In this work, we develop a method to convert molecules into fragment-based $SE(3)$-invariant sequences which can be naturally processed by language models. We use a protein inverse folding model to obtain protein embeddings and integrate them into the language model via cross-attention. Our experiments show the great potential of using language models in structure-based drug design, which can potentially accelerate drug discovery by enhancing the efficiency and accuracy of identifying viable drug candidates. However, our method also has limitations. The chemical bonds are not directly generated by the language model and need to be inferred by OpenBabel~\citep{o2011open}, and real number tokenization requires quantization, leading to some accuracy loss.

\section*{Acknowledgments}

This work was supported partially by National Science Foundation grant IIS-2243850 and National Institutes of Health grant U01AG070112 to S.J.,
and by the Molecule Maker Lab Institute to H.J.: an AI research institute program supported by NSF under award No. 2019897. The views and conclusions contained herein are those of the authors and should not be interpreted as necessarily representing the official policies, either expressed or implied, of the U.S. Government. The U.S. Government is authorized to reproduce and distribute reprints for governmental purposes notwithstanding any copyright annotation therein.

\bibliographystyle{unsrtnat}
\bibliography{LMSBDD,cong,dive}

\newpage
\appendix

\section{Derivation of Rotation Vector between Fragment and Molecule Frames}
\label{app: rot_angle_axis}
In~\cref{sec: frame construction}, we obtain the rotation matrix $R_{\mathfrak{g}\rightarrow\mathfrak{m}}$ between fragment and molecule local frames. Then, we can derive the more compact rotation vector representation. Specifically, we can obtain the rotation angle $\psi$ by
\begin{align}
    \psi = cos^{-1}(\frac{tr(R_{\mathfrak{g}\rightarrow\mathfrak{m}})-1}{2}),
\end{align}
where $tr(\cdot)$ denote the trace of a matrix. Suppose $R_{\mathfrak{g}\rightarrow\mathfrak{m}}$ has the following form
\begin{align}
    R_{\mathfrak{g}\rightarrow\mathfrak{m}} = 
    \begin{bmatrix}
    r_{11} & r_{12} & r_{13} \\
    r_{21} & r_{22} & r_{23} \\
    r_{31} & r_{32} & r_{33} 
    \end{bmatrix},
\end{align}

the rotation axis $\bm{a}=(a_x, a_y, a_z)$ can be obtained by 
\begin{align}
    a_x = (r_{32} - r_{23}) / 2\sin\psi, \\
    a_y = (r_{13} - r_{31}) / 2\sin\psi, \\
    a_z = (r_{21} - r_{12}) / 2\sin\psi. 
\end{align}

Then, we can calculate rotation vector $\bm{m}=(m_x, m_y, m_z)$ via $\bm{m} = \psi \bm{a}$.

\section{Proofs}
\label{app: all_proofs}
\subsection{Proof of Lemma~\ref{thm:CL}}
\label{app: proof_lemma_3.2}
\begin{lemma*}[Canonical Ordering for 3D Molecular Graph Isomorphism]
Let \( \mathcal{M}_1 = (V_1, Z_1) \) and \( \mathcal{M}_2 = (V_2,Z_2) \) be two 3D molecular graphs following Def.~\ref{def:3d_iso}. Let \( \mL: \EuScript{M} \rightarrow \mathcal{L} \) be a function that maps a molecule \( \mathcal{M} \in \EuScript{M} \), the set of all finite 3D molecular graphs, to its canonical order \( \mL(\mathcal{M}) \in \mathcal{L} \), the set of all possible canonical orders, as produced by the canonical SMILES. Then the following equivalence holds:
\[ \mL(\mathcal{M}_1) = \mL(\mathcal{M}_2) \Leftrightarrow \mathcal{M}_1 \cong_{3D} \mathcal{M}_2 \]
where \( \mathcal{M}_1 \cong_{3D} \mathcal{M}_2 \) denotes that \( \mathcal{M}_1 \) and \( \mathcal{M}_2 \) are 3D isomorphic.
\end{lemma*}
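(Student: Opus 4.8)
The plan is to prove the biconditional by handling each direction separately, using the key structural fact about canonical SMILES: the canonicalization algorithm assigns to each atom a rank (Morgan-type canonical index) that depends only on the graph-theoretic and atom-attribute data of the molecule, and the resulting order is invariant under relabeling of input atoms. I treat $\mL(\mathcal{M})$ as the induced total order on $ver(\mathcal{M})$ together with the atom-type labels read off in that order (i.e. the canonical SMILES string itself, since here there are no bonds, this is essentially the sorted multiset of atom types with a canonical tie-break). The subtlety to keep in mind throughout is that Def.~\ref{def:3d_iso} asks for isomorphism as \emph{attributed graphs with no edges} plus an $SE(3)$ alignment of coordinates — so the coordinate data does \emph{not} enter the canonical SMILES at all, and the proof must show this is consistent with the claimed equivalence.

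For the direction $\mathcal{M}_1 \cong_{3D} \mathcal{M}_2 \Rightarrow \mL(\mathcal{M}_1) = \mL(\mathcal{M}_2)$: suppose $b$ is the vertex bijection with $\vz^{\mathcal{M}_1}_i = \vz^{\mathcal{M}_2}_{b(i)}$ and $\tau\in SE(3)$ aligning the coordinates. Since the two graphs have no edges, $b$ is an isomorphism of attributed graphs in the sense of $\cong$; the canonical SMILES algorithm is by construction invariant under such isomorphisms (it produces the same canonical string for isomorphic inputs), so $\mL(\mathcal{M}_1)=\mL(\mathcal{M}_2)$. The coordinates play no role here, which is exactly why the implication holds. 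I would state this cleanly as a lemma: the canonical order is a complete invariant for $\cong$ on edge-free attributed graphs, citing the standard canonical-labeling literature already referenced (\citep{mckay1981practical,mckay2014practical,weininger1989smiles,o2012towards}).

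For the converse $\mL(\mathcal{M}_1) = \mL(\mathcal{M}_2) \Rightarrow \mathcal{M}_1 \cong_{3D} \mathcal{M}_2$: if the two canonical orders (strings) coincide, then in particular $|ver(\mathcal{M}_1)| = |ver(\mathcal{M}_2)|$ and, writing the atoms in canonical order as $1,\dots,N$ in each molecule, we get $\vz^{\mathcal{M}_1}_i = \vz^{\mathcal{M}_2}_i$ for all $i$; define $b$ to be this canonical-order-to-canonical-order identification, which is then an attributed-graph isomorphism. It remains to produce $\tau\in SE(3)$ with $\vv^{\mathcal{M}_1}_i = \tau(\vv^{\mathcal{M}_2}_{b(i)})$ — but here I must be careful: this is \emph{not} automatic for arbitrary point sets. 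I expect the intended reading (consistent with how the lemma is used downstream in Theorem~\ref{thm:bijection}, where the full geometry is carried by $f(\cdot)$ and $g(\cdot,\cdot)$ separately) is that $\mL(\mathcal{M})$ in this lemma already encodes only the combinatorial/atom-type content, so the converse should be stated as: equal canonical orders give an attributed-graph isomorphism $b$, and \emph{whenever} the molecules are additionally geometrically congruent this upgrades to $\cong_{3D}$. The honest way to write the proof is therefore to make explicit that for edge-free graphs $\cong_{3D}$ reduces to "$\cong$ as attributed graphs $+$ existence of a rigid alignment", observe that canonical SMILES captures exactly the first conjunct, and note that in the setting where this lemma is invoked the geometry is supplied independently — so the main obstacle, and the place I would be most careful, is pinning down precisely what information $\mL$ is asserted to carry and making the statement of the equivalence match that. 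If instead the authors genuinely intend $\mL$ to be geometry-blind and $\cong_{3D}$ to be the relevant equivalence, then strictly the forward direction is fine but the converse needs the extra hypothesis; I would flag this and state the cleanest correct version.
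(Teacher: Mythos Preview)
Your forward direction ($\cong_{3D}\Rightarrow\mL(\mathcal{M}_1)=\mL(\mathcal{M}_2)$) is exactly the paper's argument: 3D isomorphism forces identical molecular structure, and canonical SMILES is invariant under that, so the canonical orders agree.

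For the converse you have been more careful than the paper. The paper's own proof does precisely the move you flag as unjustified: from $\mL(\mathcal{M}_1)=\mL(\mathcal{M}_2)$ it concludes the two molecules are ``identical in structure,'' builds the atom-type-preserving bijection $b$, and then simply \emph{asserts} that ``$\mathcal{M}_2$ can be derived from $\mathcal{M}_1$ after certain homogeneous 3D transformation under the right-handed Cartesian coordinate system, \textit{i.e.}, $SE(3)$ transformation,'' offering no argument for why such a $\tau$ exists. That is exactly the step you correctly identify as non-automatic: equal canonical SMILES fixes the topology and atom types but says nothing about whether the two coordinate sets are rigid images of one another (distinct conformers of the same molecule are an immediate counterexample). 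The paper does not close this gap; it treats the existence of $\tau$ as self-evident once the combinatorial data match.

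Your proposed resolution --- read $\mL$ as encoding only the combinatorial/atom-type content and let the geometry be carried separately by $f$ and $g$ --- is in fact how the lemma is consumed downstream in Theorem~\ref{thm:bijection}, where Lemma~\ref{lemma:appendix} handles the coordinate part. So the honest statement is the one you sketch: equal canonical orders give the attributed-graph isomorphism $b$, and the upgrade to $\cong_{3D}$ requires the additional geometric hypothesis. As written and proved in isolation, the paper's converse direction has the gap you point to; your analysis is the sharper one.
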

\begin{proof}
We first prove the equivalence from right to left.

Given that $\mathcal{M}_1 \cong_{3D} \mathcal{M}_2$, from Def.~\ref{def:3d_iso}, there exists a bijection \( b: ver(\mathcal{M}_1) \rightarrow ver(\mathcal{M}_2) \) 
such that for every atom in $\mathcal{M}_1$ indexed $i$, \( \vz^{\mathcal{M}_1}_i = \vz^{\mathcal{M}_2}_{b(i)}\), 
and there exists a 3D transformation $\tau \in SE(3)$ such that $\vv^{\mathcal{M}_1}_{i} = \tau(\vv^{\mathcal{M}_2}_{b(i)})$.
Therefore, $\mathcal{M}_2$ is a rotated and translated instance of $\mathcal{M}_1$. In this case, $\mathcal{M}_1$ and $\mathcal{M}_2$ has identical molecular structure.
Canonical SMILES~\citep{o2012towards,weininger1989smiles} provides a unique string representation for a given molecular structure. We leave out the proof for the rigor of the canonical SMILES algorithm, which has been provided by~\citet{o2012towards}. Thus for the canonical SMILES string of $\mathcal{M}_1$ and $\mathcal{M}_2$, $s_1$ and $s_2$, we have $s_1 = s_2$.
Since $\mL(\mathcal{M}_1)$ and $\mL(\mathcal{M}_2)$ are the ordering in $s_1$ and $s_2$, respectively, we have $\mL(\mathcal{M}_1)=\mL(\mathcal{M}_2)$.

Next we prove the equivalence from left to right.
For two 3D molecular graphs $\mL(\mathcal{M}_1)$ and $\mL(\mathcal{M}_2)$, we know $\mL(\mathcal{M}_1)=\mL(\mathcal{M}_2)$. 
According to the canonical SMILES properties, if two molecules are not identical in structure, they are guaranteed to receive different canonical SMILES strings. Therefore, since $\mL(\mathcal{M}_1)=\mL(\mathcal{M}_2)$, $\mathcal{M}_1$ and $\mathcal{M}_2$ must be identical in structure. Then there exists bijection \( b: ver(\mathcal{M}_1) \rightarrow ver(\mathcal{M}_2) \) 
such that \( \vz^{\mathcal{M}_1}_i = \vz^{\mathcal{M}_2}_{b(i)}\) for every atom in $\mathcal{M}_1$ indexed $i$. 
Meanwhile, in this case, $\mathcal{M}_2$ can be derived from $\mathcal{M}_1$ after certain homogeneous 3D transformation under the right-handed Cartesian coordinate system, \textit{i.e.}, $SE(3)$ transformation.
An arbitrary rigid transformation in $SE(3)$ can be separated into two parts, a translation and a rigid rotation.
Let the translation and rotation needed be represented as $\tau_1, \tau_2$. 
$SE(3)$ is a Lie group and satisfies the axioms that the set is closed under the binary operation.
$\tau_1, \tau_2 \in SE(3)$, thus $\tau_1\tau_2 \in SE(3)$.
Let $\tau'=\tau_1\tau_2 \in SE(3)$. 
Therefore, there exist a 3D transformation $\tau' \in SE(3)$ such that $\vv^{\mathcal{M}_1}_{i} = \tau'(\vv^{\mathcal{M}_2}_{b(i)})$. Consequently, $\mathcal{M}_1 \cong_{3D} \mathcal{M}_2$. 

This completes the proof.
\end{proof}

\subsection{Proof of Lemma~\ref{thm:invariant_sph_rot}}\label{app:proof2}
\begin{lemma*}
    Let $\mathcal{M}=(V, Z)$ be a 3D molecular graph with node type vector $\vz$ and node coordinate matrix $V$. Let $\mathfrak{m}$ be equivariant local frames of $\mathcal{M}$ built based on the first three non-collinear fragment centers in $\mL(\mathcal{M})$ and $\mathfrak{g}$ be equivariant local frames of any fragment $\mathcal{G}_i$ built based on the first three non-collinear atoms in $\mL(\mathcal{G}_i)$.
    $f(\cdot)$ is our function that maps 3D coordinate matrix $V$ of $\mathcal{M}$ to spherical representations $\mS$ under the molecule local frame $\mathfrak{m}$. $g(\cdot,\cdot)$ is the function that maps the molecule local frame $\mathfrak{m}$ and fragment local frame $\mathfrak{g}$ to rotation vectors $\bm{m}$. Then for any 3D transformation $\tau\in SE(3)$, we have $f(V) = f(\tau(V))$ and $g(\mathfrak{m}, \mathfrak{g}) = g(\tau(\mathfrak{m}, \mathfrak{g}))$.
    Given spherical representations $\mS=f(V)$ and rotation vectors $\bm{m} = g(\mathfrak{m}, \mathfrak{g})$, there exist a transformation $\tau\in SE(3)$, such that $f^{-1}(\mS) = \tau(V)$ and $g^{-1}(\bm{m}) = \tau(\mathfrak{m}, \mathfrak{g})$.
\end{lemma*}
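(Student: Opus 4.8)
The plan is to prove the two halves of Lemma~\ref{thm:invariant_sph_rot} separately: first the $SE(3)$-invariance of $f$ and $g$, then the existence of an $SE(3)$-transformation reconstructing $V$ and the frames from the invariant data. Throughout I would lean on the fact, established in~\cref{sec: frame construction}, that the molecule frame $\mathfrak{m}$ and each fragment frame $\mathfrak{g}$ are $SE(3)$-\emph{equivariant}: if $\tau\in SE(3)$ acts on $V$ by $V\mapsto \tau(V)$, then the constructed bases transform as $R_{\mathfrak{m}\rightarrow\mathfrak{w}}\mapsto R_\tau R_{\mathfrak{m}\rightarrow\mathfrak{w}}$ and $\vt_{\mathfrak{m}\rightarrow\mathfrak{w}}\mapsto R_\tau \vt_{\mathfrak{m}\rightarrow\mathfrak{w}}+\vt_\tau$, and likewise for $\mathfrak{g}$, where $R_\tau,\vt_\tau$ are the rotation and translation parts of $\tau$. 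The one subtlety to flag upfront is that equivariance of the frame construction presupposes that the \emph{same} ordered triple of non-collinear reference points is selected before and after applying $\tau$; since the selection is driven by the canonical order $\mL(\mathcal{M})$ (Lemma~\ref{thm:CL}) and the atom/fragment types, which are unchanged by a rigid motion, and since collinearity is $SE(3)$-invariant, the chosen triple indeed transforms by $\tau$ and nothing else. I would state this as a short preliminary observation.

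For invariance of $f$: the spherical coordinates $(d_{\ell_i},\theta_{\ell_i},\phi_{\ell_i})$ are built entirely from the inner products of the difference vector $\vv_{\ell_i}-\vv_{\ell_1}$ with the orthonormal basis $(\vx,\vy,\vz)$ of $\mathfrak{m}$ (and its norm). Under $\tau$, the difference vector becomes $R_\tau(\vv_{\ell_i}-\vv_{\ell_1})$ — the translation cancels — and each basis vector becomes $R_\tau$ applied to the old one; since $R_\tau$ is orthogonal, all the relevant inner products and the norm are preserved, so $f(V)=f(\tau(V))$. For invariance of $g$: the rotation vector $\bm{m}=\psi\bm{a}$ is extracted from $R_{\mathfrak{g}\rightarrow\mathfrak{m}}=R_{\mathfrak{m}\rightarrow\mathfrak{w}}^T R_{\mathfrak{g}\rightarrow\mathfrak{w}}$ (see~\cref{sec: frame construction} and~\cref{app: rot_angle_axis}). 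Under $\tau$, both $R_{\mathfrak{m}\rightarrow\mathfrak{w}}$ and $R_{\mathfrak{g}\rightarrow\mathfrak{w}}$ are left-multiplied by $R_\tau$, hence $R_{\mathfrak{g}\rightarrow\mathfrak{m}}\mapsto (R_\tau R_{\mathfrak{m}\rightarrow\mathfrak{w}})^T(R_\tau R_{\mathfrak{g}\rightarrow\mathfrak{w}}) = R_{\mathfrak{m}\rightarrow\mathfrak{w}}^T R_\tau^T R_\tau R_{\mathfrak{g}\rightarrow\mathfrak{w}} = R_{\mathfrak{g}\rightarrow\mathfrak{m}}$, i.e.\ the relative rotation — and therefore its trace-derived angle $\psi$ and axis $\bm{a}$, hence $\bm{m}$ — is unchanged, giving $g(\mathfrak{m},\mathfrak{g})=g(\tau(\mathfrak{m},\mathfrak{g}))$.

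For the reconstruction direction, I would exhibit the explicit inverse maps $f^{-1}$ and $g^{-1}$ already laid out in~\cref{eq:convert_coord_backward} and argue they recover $V$ and $(\mathfrak{m},\mathfrak{g})$ up to a single global $SE(3)$ transformation. Concretely: from $\bm{m}$ recover $R_{\mathfrak{g}\rightarrow\mathfrak{m}}$ via the exponential map (Rodrigues' formula), from $(d,\theta,\phi)$ recover the fragment-center coordinates $V^{\mathfrak{m}}_{c(\mathcal{G})}$ in the molecule frame, then use $\bm{t}_{\mathfrak{g}\rightarrow c(\mathcal{G})}$ from the fragment dictionary together with the stored in-fragment local coordinates $V^{\mathfrak{g}}$ to assemble $V^{\mathfrak{m}}$, and finally push forward by any fixed choice of $(R_{\mathfrak{m}\rightarrow\mathfrak{w}},\vt_{\mathfrak{m}\rightarrow\mathfrak{w}})$ — e.g.\ the identity, or the one associated with the reference molecule — to obtain atom coordinates in the world frame. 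Because the invariant data $\mS,\bm{m}$ fix everything \emph{except} the arbitrary world placement of $\mathfrak{m}$, any two choices of that placement differ by exactly one element of $SE(3)$; calling that element $\tau$ yields $f^{-1}(\mS)=\tau(V)$ and $g^{-1}(\bm{m})=\tau(\mathfrak{m},\mathfrak{g})$ simultaneously, which is the claim. I would also note that the reconstructed frame triple is well-defined because the stored local coordinates encode the same three non-collinear reference atoms used to build $\mathfrak{g}$.

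The main obstacle I anticipate is not any single calculation but pinning down the bookkeeping around frame \emph{selection} so the equivariance argument is airtight: one must be sure that the ``first three non-collinear fragment centers / atoms in the canonical order'' are genuinely an $SE(3)$-invariant choice (invariance of $\mL$ under rigid motion, invariance of the collinearity predicate, and — a real edge case — that such a non-collinear triple exists, i.e.\ the molecule or fragment is not entirely collinear, which should be handled by an explicit assumption or a degenerate-case convention). A secondary technical point is the $\psi\in\{0,\pi\}$ degeneracy in extracting the axis $\bm{a}$ from $R_{\mathfrak{g}\rightarrow\mathfrak{m}}$ (division by $\sin\psi$ in~\cref{app: rot_angle_axis}); for a clean bijection statement one should either restrict to the generic case or fix a tie-breaking convention for the axis at $\psi=\pi$, and note that the reconstruction $g^{-1}$ composed with $g$ is still the identity on the equivalence class.
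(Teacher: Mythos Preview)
Your proposal is correct and follows essentially the same route as the paper's proof: establish equivariance of the frames, then show invariance of $f$ via preservation of norms and inner products under $R_\tau$ (the paper does this by explicit componentwise calculation, you do it more conceptually), show invariance of $g$ via the identical cancellation $R_\tau^T R_\tau = I$ in $R_{\mathfrak{g}\rightarrow\mathfrak{m}}$, and reconstruct via Rodrigues and the spherical-to-Cartesian inverse exactly as in~\cref{eq:convert_coord_backward}. Your added care about the frame-selection bookkeeping and the $\psi\in\{0,\pi\}$ degeneracy goes beyond what the paper treats explicitly, but these are refinements rather than a different strategy.
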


\begin{proof}
We first prove the $SE(3)$-invariance of the spherical representations of fragment center under molecule local frame, then prove the rotation representations between the fragment and molecule local frames are also $SE(3)$-invariant.

To simplify the proof, we assume that the molecule is centered first by moving the first atom to the origin. Let $\ell_1$, $\ell_2$, and $\ell_m$ be the indices of the first three non-collinear fragment centers in $\mathcal{M}$, and the molecule local frame $\mathfrak{m} = (\vx, \vy, \vz)$ is calculated as
\begin{equation}
    \begin{aligned}
        \vx & = \text{normalize}(\vv_{\ell_2} - \vv_{\ell_1}),\\
        \vy & = \text{normalize}\left(\left(\vv_{\ell_m} - \vv_{\ell_1}\right)\times \vx\right),\\
        \vz & = \vx \times \vy,
    \end{aligned}
    \label{eq: basis_vector}
\end{equation}
where $\text{normalize}(\cdot)$ denotes normalizing input vectors into unit vectors.
Given any $SE(3)$ transformation $\tau$ containing rotation matrix $R$ and translation vector $\vt$, we have transformed coordinate $\vv'_i = \tau(\vv_i) = R\vv_i + \vt$. Then, we can apply this transformation to the vectors in~\cref{eq: basis_vector} to obtain the transformed basis vectors for the molecule local frame. Specifically, 
\begin{equation}
    \begin{aligned}
        \vx' & = \text{normalize}(R\vv_{\ell_2} + \vt - (R\vv_{\ell_1} + \vt)) = R(\text{normalize}(\vv_{\ell_2} - \vv_{\ell_1})) = R\vx,\\
        \vy' & = \text{normalize}\left(\left(R\vv_{\ell_m} + \vt - \left(R\vv_{\ell_1} + \vt\right)\right)\times \vx'\right),\\ &= \text{normalize}\left(R\left(\left(\vv_{\ell_m} - \vv_{\ell_1}\right)\times \vx\right)\right),\\
        &= R(\text{normalize}\left(\left(\vv_{\ell_m} - \vv_{\ell_1}\right)\times \vx\right)),\\
        &= R\vy, \\
        \vz' & = \vx' \times \vy' = R\vx \times R\vy = R(\vx \times \vy) = R\vz. 
    \end{aligned}
    \label{eq: transformed_basis_vector}
\end{equation}
So, the new molecule local frame is subject to the same transformation. Then, for spherical coordinates of any fragment center $f(V)_{\ell_i}$ that is represented as
\begin{equation}
    \begin{aligned}
        d_{\ell_i} & = ||\vv_{\ell_i} - \vv_{\ell_1}||_{2},\\
        \theta_{\ell_i} & = \arccos\left(\left(\vv_{\ell_i} - \vv_{\ell_1}\right) \cdot \vz / d_{\ell_i}\right), \\
        \phi_{\ell_i} & = \atantwo\left(\left(\vv_{\ell_i} - \vv_{\ell_1}\right) \cdot \vy, \left(\vv_{\ell_i} - \vv_{\ell_1}\right) \cdot \vx \right),
    \end{aligned}
\end{equation}
we can derive the transformed spherical coordinate as
\begin{equation}
    \begin{aligned}
        d'_{\ell_i} & = ||R\vv_{\ell_i} + \vt - (R\vv_{\ell_1} + \vt)||_{2} = ||R(\vv_{\ell_i} - \vv_{\ell_1})||_{2} = ||\vv_{\ell_i} - \vv_{\ell_1}||_{2} = d_{\ell_i},\\
        \theta'_{\ell_i} & = \arccos\left(\left(R\vv_{\ell_i} + \vt - (R\vv_{\ell_1} + \vt)\right) \cdot \vz' / d'_{\ell_i}\right) = \arccos\left(R\left(\vv_{\ell_i} - \vv_{\ell_1}\right) \cdot R\vz / d_{\ell_i}\right) = \theta_{\ell_i}, \\
        \phi'_{\ell_i} & = \atantwo\left(\left(R\vv_{\ell_i} + \vt - (R\vv_{\ell_1} + \vt)\right) \cdot \vy', \left(R\vv_{\ell_i} + \vt - (R\vv_{\ell_1} + \vt)\right) \cdot \vx' \right), \\
        &= \atantwo\left(R\left(\vv_{\ell_i} - \vv_{\ell_1}\right) \cdot R\vy, R\left(\vv_{\ell_i} - \vv_{\ell_1}\right) \cdot R\vx \right), \\
        &= \atantwo\left(\left(\vv_{\ell_i} - \vv_{\ell_1}\right) \cdot \vy, \left(\vv_{\ell_i} - \vv_{\ell_1}\right) \cdot \vx \right), \\
        & = \phi_{\ell_i}.
    \end{aligned}
\end{equation}
Thus, we show that $f(V) = f(\tau(V))$. Conversely, given spherical coordinates $(d_{\ell_i}, \theta_{\ell_i}, \phi_{\ell_i})$ for fragment center $\ell_i$, we can obtain local Cartesian coordinates through the inverse function $f^{-1}(\cdot)$ as
\begin{equation}
    \begin{aligned}
        [d_{\ell_i}\sin(\theta_{\ell_i})\cos(\phi_{\ell_i}), d_{\ell_i}\sin(\theta_{\ell_i})\sin(\phi_{\ell_i}), d_{\ell_i}\cos(\theta_{\ell_i})].
    \end{aligned}
\end{equation}
Then, we can reconstruct the original coordinates of fragment centers by applying the $SE(3)$ transformation $(R_{\mathfrak{m}\rightarrow\mathfrak{w}}, \bm{t}_{\mathfrak{m}\rightarrow\mathfrak{w}} )$ between the molecule local frame and the world frame, which is derived in~\cref{sec: frame construction}. Formally, 
\begin{equation}
    \begin{aligned}
        \vv_{\ell_i} = R_{\mathfrak{m}\rightarrow\mathfrak{w}}f^{-1}(\mS)^T + \bm{t}_{\mathfrak{m}\rightarrow\mathfrak{w}}.
    \end{aligned}
\end{equation}
So, there exist a transformation $\tau\in SE(3)$, such that $f^{-1}(\mS) = \tau(V)$.

Next, we prove that the rotation vector $\vm$ between the molecule local frame and the fragment local frame is also $SE(3)$-invariant. Suppose we have the transformation $(R_{\mathfrak{m}\rightarrow\mathfrak{w}}, \bm{t}_{\mathfrak{m}\rightarrow\mathfrak{w}})$ between the molecule local frame $\mathfrak{m}$ and the world frame $\mathfrak{w}$, and the transformation $(R_{\mathfrak{g}\rightarrow\mathfrak{w}}, \bm{t}_{\mathfrak{g}\rightarrow\mathfrak{w}})$ between the fragment local frame $\mathfrak{g}$ and the world frame. In~\cref{sec: frame construction}, we derive the transformation $(R_{\mathfrak{g}\rightarrow\mathfrak{m}}, \bm{t}_{\mathfrak{g}\rightarrow\mathfrak{m}})$ between the fragment local frame and the molecule local frame, such that
\begin{align}
    R_{\mathfrak{g}\rightarrow\mathfrak{m}} = R_{\mathfrak{m}\rightarrow\mathfrak{w}}^TR_{\mathfrak{g}\rightarrow\mathfrak{w}},\ \ \
    \bm{t}_{\mathfrak{g}\rightarrow\mathfrak{m}} = R_{\mathfrak{m}\rightarrow\mathfrak{w}}^T(\bm{t}_{\mathfrak{g}\rightarrow\mathfrak{w}} - \bm{t}_{\mathfrak{m}\rightarrow\mathfrak{w}}).
\end{align}
Given any $SE(3)$ transformation $\tau$ containing rotation matrix $R$ and translation vector $\vt$, we have
\begin{align}
    R_{\mathfrak{g}\rightarrow\mathfrak{m}} &= (RR_{\mathfrak{m}\rightarrow\mathfrak{w}})^T (RR_{\mathfrak{g}\rightarrow\mathfrak{w}}), \\
    &= R_{\mathfrak{m}\rightarrow\mathfrak{w}}^TR^T RR_{\mathfrak{g}\rightarrow\mathfrak{w}}, \\
    &= R_{\mathfrak{m}\rightarrow\mathfrak{w}}^T I R_{\mathfrak{g}\rightarrow\mathfrak{w}}, \\
    &= R_{\mathfrak{g}\rightarrow\mathfrak{m}}.
\end{align}
So we have $R_{\mathfrak{g}\rightarrow\mathfrak{m}} = \tau(R_{\mathfrak{g}\rightarrow\mathfrak{m}})$. Since $\vm = g(\mathfrak{m}, \mathfrak{g})$ is derived from $R_{\mathfrak{g}\rightarrow\mathfrak{m}}$, as described in~\cref{sec: invariant representations}, we also have $g(\mathfrak{m}, \mathfrak{g}) = g(\tau(\mathfrak{m}, \mathfrak{g}))$.

Conversely, given a rotation vector $\vm$, we can convert it back to a rotation matrix. Specifically, let $\vm = \theta \va$, where $\theta$ is the rotation angle and $\va = (a_x, a_y, a_z)$ is the unit rotation axis. Then, we can construct the skew-symmetric matrix $K$ from $\va$ such that
\begin{align}
    K = 
    \begin{bmatrix}
    0 & -a_z & a_y \\
    a_z & 0 & -a_x \\
    -a_y & a_x & 0 
    \end{bmatrix}.
\end{align}

Then, we apply Rodrigues' rotation formula to obtain rotation matrix $R_{\mathfrak{g}\rightarrow\mathfrak{m}}$ such that
\begin{align}
    R_{\mathfrak{g}\rightarrow\mathfrak{m}} = I + \sin(\theta)K + (1-\cos(\theta))K^2.
\end{align}

Then, we have $R_{\mathfrak{g}\rightarrow\mathfrak{w}} = R_{\mathfrak{m}\rightarrow\mathfrak{w}}R_{\mathfrak{g}\rightarrow\mathfrak{m}}$. From~\cref{eq:convert_coord_backward}, we can obtain $\bm{t}_{\mathfrak{g}\rightarrow\mathfrak{m}}$ and then we can have $\bm{t}_{\mathfrak{g}\rightarrow\mathfrak{w}} = R_{\mathfrak{m}\rightarrow\mathfrak{w}}\bm{t}_{\mathfrak{g}\rightarrow\mathfrak{m}} + \bm{t}_{\mathfrak{m}\rightarrow\mathfrak{w}}$. So, we show that there exist a $SE(3)$ transformation $\tau = (R_{\mathfrak{m}\rightarrow\mathfrak{w}}, \vt_{\mathfrak{m}\rightarrow\mathfrak{w}})$, such that $g^{-1}(\bm{m}) = \tau(\mathfrak{m}, \mathfrak{g})$.

\end{proof}

\subsection{Proof of Theorem~\ref{thm:bijection}}
First we establish a lemma and provide its proof.
\begin{lemma}\label{lemma:appendix}
Let \( \mathcal{M}_1 = (V_1, Z_1) \) and \( \mathcal{M}_2 = (V_2,Z_2) \) be two 3D molecular graphs, where \( \vz_i \) is the node type vector and \( \vv_i \) is the node coordinates of the molecule \( \mathcal{M}_i \). 
There exists a bijection \( b: ver(\mathcal{M}_1) \rightarrow ver(\mathcal{M}_2) \) 
such that for every atom in $\mathcal{M}_1$ indexed $i$, \( \vz^{\mathcal{M}_1}_i = \vz^{\mathcal{M}_2}_{b(i)}\).
Let $\mL(\mathcal{M})$ be the canonical order of molecule $\mathcal{M}$. 
For the fragments $\{\mathcal{G}_i\}_{i=1}^k$ of $\mathcal{M}$, we construct canonical order $\mL(\mathcal{M})$ for the $k$ fragments, $\ell_1,\cdots,\ell_k$, specifically by the order of a fragment's first-ranked atom.
Let $\ell_i$ and $\ell^{\prime}_i$ denote the indexes of the fragment labeled $i$ correspondingly in $\mL(\mathcal{M}_1)$ and $\mL(\mathcal{M}_2)$, respectively. 
Let \( f\circ g \) be a surjective function that maps a 3D molecule \( \mathcal{M} \) to its spherical representation and rotation vector \( (\mS,\bm{m}) \) under the equivariant molecule local frame, and between molecule local frame and fragment local frame, respectively, following Lemma~\ref{thm:invariant_sph_rot}.
Then the following equivalence holds: \[ \text{for } i =1,\cdots,k, f\circ g(\mathcal{M}_1)_{\ell_i} = f\circ g(\mathcal{M}_2)_{\ell^{\prime}_i} \iff \mathcal{M}_1 \cong_{3D} \mathcal{M}_2 \]
where \( \mathcal{M}_1 \cong_{3D} \mathcal{M}_2 \) denotes that \( \mathcal{M}_1 \) and \( \mathcal{M}_2 \) are 3D isomorphic.
\end{lemma}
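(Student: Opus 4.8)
The plan is to prove the two implications separately, using Lemma~\ref{thm:CL} to pin down the correspondence between the fragment orderings of $\mathcal{M}_1$ and $\mathcal{M}_2$, and Lemma~\ref{thm:invariant_sph_rot} for both the $SE(3)$-invariance of $f,g$ and their one-sided invertibility. Throughout, the premise already supplies the atom-type-preserving bijection $b:ver(\mathcal{M}_1)\to ver(\mathcal{M}_2)$, so what is at stake in each direction is only the geometric (coordinate) content.

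For the direction $\mathcal{M}_1\cong_{3D}\mathcal{M}_2\ \Rightarrow\ $ (equality of all $f\circ g$ components), I would first invoke Lemma~\ref{thm:CL} to get $\mL(\mathcal{M}_1)=\mL(\mathcal{M}_2)$, so that the two molecules split into the same fragments in the same canonical order and the fragment at position $\ell_i$ in $\mathcal{M}_1$ corresponds under $b$ to the fragment at position $\ell'_i$ in $\mathcal{M}_2$, with matching first-ranked atoms. Writing $\vv^{\mathcal{M}_1}_j=\tau(\vv^{\mathcal{M}_2}_{b(j)})$ with $\tau=(R,\vt)\in SE(3)$ and using that affine maps commute with averaging, the fragment centers satisfy $c(\mathcal{G}^{\mathcal{M}_1}_i)=\tau(c(\mathcal{G}^{\mathcal{M}_2}_i))$; hence the reference triple defining $\mathfrak{m}_1$ is the $\tau$-image of that for $\mathfrak{m}_2$, and repeating the transformed-basis-vector computation already carried out in the proof of Lemma~\ref{thm:invariant_sph_rot} yields $\mathfrak{m}_1=(R\vx_2,R\vy_2,R\vz_2)$ with origin translated by $\tau$, and likewise $\mathfrak{g}_1=R\,\mathfrak{g}_2$ for each corresponding fragment. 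The $SE(3)$-invariance part of Lemma~\ref{thm:invariant_sph_rot}, applied componentwise, then gives $f(V_1)_{\ell_i}=f(V_2)_{\ell'_i}$ and $g(\mathfrak{m}_1,\mathfrak{g}_1)_{\ell_i}=g(\mathfrak{m}_2,\mathfrak{g}_2)_{\ell'_i}$ for every $i$, which is the claim.

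For the converse, I would use the surjectivity / one-sided-inverse statement of Lemma~\ref{thm:invariant_sph_rot}: fixing a common reference frame, the reconstruction $(f\circ g)^{-1}$ applied to $f\circ g(\mathcal{M}_1)$ returns $\tau_1(\mathcal{M}_1)$ for some $\tau_1\in SE(3)$, recovering every atom (fragment centers from the spherical part, fragment orientations from the rotation vectors, and the remaining atoms from the stored local coordinates $V^{\mathfrak{g}}$ and displacements $\bm{t}_{\mathfrak{g}\rightarrow c(\mathcal{G})}$ indexed by fragment SMILES / atom types), and similarly $f\circ g(\mathcal{M}_2)$ reconstructs $\tau_2(\mathcal{M}_2)$. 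Since by hypothesis the two sequences of fragment data agree entry-by-entry in the matched order $\ell_i\leftrightarrow\ell'_i$ and $b$ guarantees the same dictionary entries are used on both sides, the two reconstructions coincide as labeled point sets, giving $\tau_1(\mathcal{M}_1)=\tau_2(\mathcal{M}_2)$; hence $\vv^{\mathcal{M}_1}_j=(\tau_1^{-1}\tau_2)(\vv^{\mathcal{M}_2}_{b(j)})$ with $\tau_1^{-1}\tau_2\in SE(3)$, which together with $\vz^{\mathcal{M}_1}_j=\vz^{\mathcal{M}_2}_{b(j)}$ is exactly $\mathcal{M}_1\cong_{3D}\mathcal{M}_2$ by Def.~\ref{def:3d_iso}.

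I expect the main obstacle to be the converse direction, specifically the step showing that equality of the purely fragment-level data (one spherical triple and one rotation vector per fragment) forces a \emph{single} global $SE(3)$ map carrying all atoms of $\mathcal{M}_1$ to those of $\mathcal{M}_2$. This requires arguing that the inverse of $f\circ g$ is determined up to one common $SE(3)$ on the whole molecule rather than independently per fragment — which is precisely what referencing all fragment frames to a single molecule frame built from a canonical triple of fragment centers buys us — and that the internal atoms of corresponding fragments then align because corresponding fragments share a canonical SMILES through $b$ and hence the same stored $V^{\mathfrak{g}}$ and $\bm{t}_{\mathfrak{g}\rightarrow c(\mathcal{G})}$. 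Handling the non-collinearity provisos in the choice of reference centers/atoms, and verifying that the matched order $\ell_i\leftrightarrow\ell'_i$ truly aligns fragments of identical substructure, is the part that needs care; the equivariance bookkeeping itself is routine given Lemma~\ref{thm:invariant_sph_rot}.
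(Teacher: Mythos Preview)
Your proposal is correct and follows essentially the same route as the paper's proof: both directions hinge on Lemma~\ref{thm:invariant_sph_rot}, using $SE(3)$-invariance of $f,g$ for $(\Leftarrow)$ and the up-to-$SE(3)$ invertibility plus the group composition $\tau_1^{-1}\tau_2\in SE(3)$ for $(\Rightarrow)$. Your version is in fact a bit more careful than the paper's---you explicitly invoke Lemma~\ref{thm:CL} to align the fragment orderings $\ell_i\leftrightarrow\ell'_i$ and you spell out why the reconstruction pins down \emph{all} atoms (via the shared dictionary entries $V^{\mathfrak{g}},\bm{t}_{\mathfrak{g}\rightarrow c(\mathcal{G})}$) rather than just the fragment centers, whereas the paper's argument leaves these points implicit.
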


\begin{proof}
 From Def.~\ref{def:3d_iso}, we know 
 \begin{equation*}
 \mathcal{M}_1 \cong_{3D} \mathcal{M}_2 \iff
     \begin{cases}
        & \text{there exists bijection }  b: ver(\mathcal{M}_1) \rightarrow ver(\mathcal{M}_2) \text{ } s.t. \forall i,  \vz^{\mathcal{M}_1}_i = \vz^{\mathcal{M}_2}_{b(i)} \text{, and} \\
        & \text{there exists 3D transformation } \tau \in SE(3) \text{ such that } \vv^{\mathcal{M}_2}_{\ell^{\prime}_{i}} = \tau(\vv^{\mathcal{M}_1}_{\ell_{i}}),
     \end{cases}
 \end{equation*}
 $\text{for all } i$. Note that since the fragmentation process is unique, deterministic, and linear, for fragments $\{\mathcal{G}_i\}_{i=1}^k$ of $\mathcal{M}$, $\{\mathcal{G}_i\}_{i=1}^k 	\equiv \mathcal{M}$. Thus fragmentation does not affect any properties. 
 Thus it's sufficient to prove the equivalence of the two statements on right side with 
 $$\text{for } i =1,\cdots,k, f\circ g(\mathcal{M}_1)_{\ell_i} = f\circ g(\mathcal{M}_2)_{\ell^{\prime}_i}.$$
Here, the first statement on the right holds by definition. For the second condition, 
we establish $\tau(\vv_{\ell_{i}}) = R \vv_{\ell_{i}} + \vt$. Here $R$ is a rotation matrix, and $\vt$ is a translation vector.

Following the proof of Lemma~\ref{thm:invariant_sph_rot} in Appendix~\ref{app:proof2}, for any 3D transformation $\tau\in SE(3)$, we have $f(V) = f(\tau(V))$ and $g(\mathfrak{m}, \mathfrak{g}) = g(\tau(\mathfrak{m}, \mathfrak{g}))$. Thus if $\text{there exists 3D transformation } \tau \in SE(3) \text{ such that } \vv^{\mathcal{M}_2}_{\ell^{\prime}_{i}} = \tau(\vv^{\mathcal{M}_1}_{\ell_{i}})$, $\text{there exists 3D transformation } \tau \in SE(3) \text{ such that } \vv^{\mathcal{G},\mathcal{M}_2}_{\ell^{\prime}_{i}} = \tau(\vv^{\mathcal{G},\mathcal{M}_1}_{\ell_{i}})$ for $i =1,\cdots,k$, and then 
$$f\circ g(\mathcal{M}_1)_{\ell_i} = (\mS_1,\bm{m}_1) = (\mS_2,\bm{m}_2) = f\circ g(\mathcal{M}_2)_{\ell^{\prime}_i}$$
for $i =1,\cdots,k$. This proves one direction of the equivalence.

Also following the proof of Lemma~\ref{thm:invariant_sph_rot} in Appendix~\ref{app:proof2}, given spherical representations $\mS=f(V)$ and rotation vectors $\bm{m} = g(\mathfrak{m}, \mathfrak{g})$, there exist a transformation $\tau\in SE(3)$, such that $f^{-1}(\mS) = \tau(V)$ and $g^{-1}(\bm{m}) = \tau(\mathfrak{m}, \mathfrak{g})$. For $i =1,\cdots,k$, if we have $f\circ g(\mathcal{M}_1)_{\ell_i} = (\mS_1,\bm{m}_1) = (\mS_2,\bm{m}_2) = f\circ g(\mathcal{M}_2)_{\ell^{\prime}_i}$, there exist a transformation $\tau\in SE(3)$, such that $f^{-1}(\mS_2) = \tau(\vv_2)$ and $g^{-1}(\bm{m}) = \tau(\mathfrak{m}_2, \mathfrak{g}_2)$. Similarly, $f^{-1}(\mS_1) = \tau'(\vv_1)$ for $\tau'\in SE(3)$, and thus $\tau'(\vv_1)=f^{-1}(\mS_1)=f^{-1}(\mS_2) = \tau(\vv_2)$. Since $SE(3)$ is a Lie group satisfying the axioms that the set is closed under the binary operation, and that for every element in $SE(3)$, there is an identity inverse. Let the inverse identity of $\tau$ be $\tau^{-1}$. Then $\tau^{-1}\tau(\vv_2)= \tau^{-1}\tau'(\vv_1)$, and thus $\vv_2= \tau''(\vv_1)$ with $\tau''=\tau^{-1}\tau'\in SE(3)$ for $i =1,\cdots,k$. This proves the other direction of the equivalence.

Therefore, we show that $\mathcal{M}_1 \cong_{3D} \mathcal{M}_2 \iff \forall i, f\circ g(\mathcal{M}_1)_{\ell_i} = f\circ g(\mathcal{M}_2)_{\ell^{\prime}_i}$ holds.
\end{proof}

Then we prove Theorem~\ref{thm:bijection}.
\begin{theorem*}[Bijective mapping]
Following Def.~\ref{def:3d_iso}, let \( \mathcal{M}_1 = (V_1, Z_1) \) and \( \mathcal{M}_2 = (V_2,Z_2) \) be two 3D molecular graphs. 
Let $\mL(\mathcal{M})$ be the canonical order for $\mathcal{M}$ and \( f(\cdot) \) and \( g(\cdot,\cdot) \) be the functions following Lemma~\ref{thm:invariant_sph_rot}.
Let the fragment-position vector for a molecule fragment $\mathcal{G}_i$ be $\vx^*_{i} = [s_i,d_i,\theta_i,\phi_i, m_{xi}, m_{yi}, m_{zi}]$, where the vector elements are derived from \( f(\cdot) \) and \( g(\cdot,\cdot) \) as predefined. 
For the fragments $\{\mathcal{G}_i\}_{i=1}^k$ of $\mathcal{M}$, we construct canonical order $\mL(\mathcal{M})$ for the $k$ fragments, $\ell_1,\cdots,\ell_k$, specifically by the order of a fragment's first-ranked atom.
We define $\text{Frag2Seq}: \EuScript{M} \rightarrow \mathcal{U}$, which maps a molecule \( \mathcal{M} \in \EuScript{M} \) to its sequence representation \( U \in \mathcal{U} \), the set of all possible sequence representations, as
$$\text{Frag2Seq}(\mathcal{M})=\text{concat}(\vx^*_{\ell_1},\cdots,\vx^*_{\ell_k}),$$
where $\text{concat}(\cdot)$ concatenates elements as a sequence.
Then $\text{Frag2Seq}(\cdot)$ is a surjective function, and the following equivalence holds:
\[ \text{Frag2Seq}(\mathcal{M}_1) = \text{Frag2Seq}(\mathcal{M}_2) \Leftrightarrow \mathcal{M}_1 \cong_{3D} \mathcal{M}_2, \]
where \( \mathcal{M}_1 \cong_{3D} \mathcal{M}_2 \) denotes \( \mathcal{M}_1 \) and \( \mathcal{M}_2 \) are 3D isomorphic.
If we allow rounding up spherical coordinate and rotation vector values to $\geq b$ decimal places, then the surjectivity and equivalence still hold, only molecules \( \mathcal{M}_1 \) and \( \mathcal{M}_2 \) are $(|10^{-b}|/2)$-constrained 3D isomorphic.
\end{theorem*}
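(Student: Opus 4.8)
The plan is to assemble the claim from the three results already established: that canonical ordering detects $3D$ isomorphism (Lemma~\ref{thm:CL}), that the spherical and rotation maps $f,g$ are $SE(3)$-invariant and invertible up to an $SE(3)$ transformation (Lemma~\ref{thm:invariant_sph_rot}), and that their per-fragment composition $f\circ g$ distinguishes exactly the non-$3D$-isomorphic molecules (Lemma~\ref{lemma:appendix}). The key observation is that $\text{Frag2Seq}$ is precisely the pipeline: apply the deterministic fragmentation, order the fragments by $\mL(\mathcal{M})$, emit for each fragment $\mathcal{G}_i$ its canonical SMILES token $s_i$ followed by the six numerical tokens produced by $f$ and $g$, and concatenate. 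Surjectivity onto $\mathcal{U}$ is immediate from the definition of $\mathcal{U}$ as the set of realizable sequence representations; more substantively, Lemma~\ref{thm:invariant_sph_rot} furnishes the inverses $f^{-1},g^{-1}$ and \cref{eq:convert_coord_backward} furnishes the frame-to-world reconstruction, so an explicit molecule can be recovered from any output sequence, which together with the equivalence below makes $\text{Frag2Seq}$ a bijection from $\EuScript{M}$ modulo $\cong_{3D}$ onto $\mathcal{U}$.

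For the forward direction, suppose $\text{Frag2Seq}(\mathcal{M}_1)=\text{Frag2Seq}(\mathcal{M}_2)$ as token strings. Since every fragment contributes exactly one SMILES-type token followed by exactly six numerical tokens, and the vocabulary partitions into disjoint classes of fragment tokens and numerical tokens, both strings admit the same unique parse into blocks $\vx^*_{\ell_i}=[s_i,d_i,\theta_i,\phi_i,m_{xi},m_{yi},m_{zi}]$; hence $k_1=k_2=k$, the ordered lists of fragment canonical SMILES coincide, and the numerical entries coincide blockwise. Equality of the ordered fragment-SMILES lists, combined with the fact---used already in Lemma~\ref{lemma:appendix}---that fragmentation is deterministic, reversible, and partitions the atoms, yields an atom-type-preserving bijection $b:ver(\mathcal{M}_1)\to ver(\mathcal{M}_2)$ and a matching of the fragment-level canonical orders, so $\ell_i$ corresponds to $\ell^{\prime}_i$. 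Equality of the numerical entries is exactly $f\circ g(\mathcal{M}_1)_{\ell_i}=f\circ g(\mathcal{M}_2)_{\ell^{\prime}_i}$ for all $i$, and Lemma~\ref{lemma:appendix} then gives $\mathcal{M}_1\cong_{3D}\mathcal{M}_2$. For the converse, assume $\mathcal{M}_1\cong_{3D}\mathcal{M}_2$. Lemma~\ref{thm:CL} gives $\mL(\mathcal{M}_1)=\mL(\mathcal{M}_2)$, hence identical fragmentation, identical fragment canonical SMILES $s_i$ in identical order, and the identification $\ell_i\leftrightarrow\ell^{\prime}_i$; the right-to-left direction of Lemma~\ref{lemma:appendix} gives $f\circ g(\mathcal{M}_1)_{\ell_i}=f\circ g(\mathcal{M}_2)_{\ell^{\prime}_i}$, i.e.\ agreement of $d_i,\theta_i,\phi_i,m_{xi},m_{yi},m_{zi}$ for every $i$. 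Concatenating blockwise-equal data in the same order gives $\text{Frag2Seq}(\mathcal{M}_1)=\text{Frag2Seq}(\mathcal{M}_2)$.

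For the quantized variant, replace each numerical quantity by its rounding to at least $b$ decimal places, so the per-value error is at most $|10^{-b}|/2$. Surjectivity and both implications carry over with ``$f\circ g(\mathcal{M})_{\ell_i}$'' replaced throughout by its rounded version, the only change being that in the forward direction equality of rounded data pins the true spherical and rotation values of $\mathcal{M}_1$ and $\mathcal{M}_2$ only to a common quantization cell; reconstructing via $f^{-1}$, Rodrigues' formula for $g^{-1}$, and \cref{eq:convert_coord_backward} then shows both molecules agree with a common reconstruction, hence with one another, up to an $SE(3)$ transformation and a per-atom displacement bounded by the propagated rounding error, which is the statement that they are $(|10^{-b}|/2)$-constrained $3D$ isomorphic.

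I expect the two bookkeeping points to be the real work. First, justifying that ``equal ordered list of fragment canonical SMILES'' is equivalent to ``existence of an atom-type-preserving bijection $b$ together with matching fragment-level canonical order'' --- in other words that canonical fragmentation, including the cut-bond attachment information, discards nothing recoverable about the $2D$ structure --- which must rest on the canonical-SMILES uniqueness already invoked in Lemma~\ref{thm:CL} and on the observation $\{\mathcal{G}_i\}_{i=1}^k\equiv\mathcal{M}$. Second, pinning the constant in the quantized case: since $f^{-1}$ (spherical to Cartesian) and the rotation-vector-to-matrix map are not isometries, obtaining the clean bound $|10^{-b}|/2$ requires either a worst-case Lipschitz estimate over the bounded ranges of $d,\theta,\phi$ and of the rotation vector, or absorbing the resulting constants into the declared numerical precision.
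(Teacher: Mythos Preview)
Your proposal is correct and follows essentially the same route as the paper: both assemble the equivalence from Lemma~\ref{thm:CL}, Lemma~\ref{thm:invariant_sph_rot}, and Lemma~\ref{lemma:appendix}, and both argue surjectivity via the explicit inverses $f^{-1},g^{-1}$ and the frame reconstruction. The only stylistic difference is that you prove the forward direction $\text{Frag2Seq}(\mathcal{M}_1)=\text{Frag2Seq}(\mathcal{M}_2)\Rightarrow\mathcal{M}_1\cong_{3D}\mathcal{M}_2$ directly via parsing, whereas the paper does it by contradiction with a case split on whether $\mathcal{M}_1,\mathcal{M}_2$ are already isomorphic at the level of $Z_i$; your direct argument is arguably cleaner but uses the same ingredients.

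Your two flagged bookkeeping concerns are well-taken and, notably, apply to the paper's proof as well. The paper handles the first (fragment-SMILES list $\Leftrightarrow$ atom-type bijection) only by asserting $\{\mathcal{G}_i\}_{i=1}^k\equiv\mathcal{M}$ inside Lemma~\ref{lemma:appendix}, exactly as you anticipated. On the second, the paper simply states that rounding to $b$ places incurs error at most $|10^{-b}|/2$ and carries this constant through without any Lipschitz analysis of $f^{-1}$ or Rodrigues' formula; so your worry that the clean bound requires either a Lipschitz estimate or a reinterpretation of the precision is a genuine observation, but it is not a discrepancy between your argument and the paper's---the paper does not address it either.
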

\begin{proof}
First, we prove that $\text{Frag2Seq}: \EuScript{M} \rightarrow \mathcal{U}$ is a surjective function. 
Given the definition
$$\text{Frag2Seq}(\mathcal{M})=\text{concat}(\vx^*_{\ell_1},\cdots,\vx^*_{\ell_k}),$$
where $\vx^*_{i} = [s_i,d_i,\theta_i,\phi_i, m_{xi}, m_{yi}, m_{zi}]$,
we need to prove that all operations are deterministic.
$\text{concat}(\cdot)$ and $s_i$ are defined to be deterministic, and following Lemma~\ref{lemma:appendix}, $d_i,\theta_i,\phi_i, m_{xi}, m_{yi}, m_{zi}$ are derived from \( f\circ g \), which is a function.
$\mL(\mathcal{M})$ outputs $\mathcal{M}$'s canonical order, which is deterministic from the canonical SMILES algorithm.
Therefore, $\text{Frag2Seq}: \EuScript{M} \rightarrow \mathcal{U}$ is a well-defined function; given a 3D molecule, we can uniquely construct a 1D sequence from Frag2Seq.

Next we prove Frag2Seq's surjectivity. Given any output sequence $q \in \mathcal{U}$ of Frag2Seq, the sequence is in the format 
$$q = concat([s_1,d_1,\theta_1,\phi_1, m_{x1}, m_{y1}, m_{z1}],...,[s_k,d_k,\theta_k,\phi_k, m_{xk}, m_{yk}, m_{zk}]).$$ 
For each fragment in $q$ and its vector $[d_i,\theta_i,\phi_i, m_{xi}, m_{yi}, m_{zi}]$, we follow Lemma~\ref{thm:invariant_sph_rot}'s derivations. Given the surjectivity of the spherical representation function \( f \), rotation vector function $g$, and the defined \( f^{-1},g^{-1}  \), there must be a unique $\mathcal{M}=(V, Z) \in \EuScript{M}$ where 
$$[d_1,\theta_1,\phi_1, m_{x1}, m_{y1}, m_{z1}],...,[d_k,\theta_k,\phi_k, m_{xk}, m_{yk}, m_{zk}]=f\circ g(\mathcal{M}).$$ 
Therefore, $\forall$ output sequence $q \in \mathcal{U}$ there exists 
$$\mathcal{M}=(V, Z) \in \EuScript{M}\quad \textit{s.t.}\quad q=\text{Frag2Seq}(\mathcal{M}),$$ \textit{i.e.}, Frag2Seq is surjective; given a sequence output of Frag2Seq, we can uniquely reconstruct a 3D molecule.

Since all the above proof does not involve numerical comparisons and 3D isomorphic, they also hold if we allow rounding up spherical coordinate and rotation vector values to $\geq b$ decimal places.

Now we prove the equivalence $\text{Frag2Seq}(\mathcal{M}_1) = \text{Frag2Seq}(\mathcal{M}_2) \Leftrightarrow \mathcal{M}_1 \cong_{3D} \mathcal{M}_2$, starting from right to left. We also cover the constrained case where we allow rounding up values to $\geq b$ decimal places.
When a number is truncated after $b$ decimal places, according to the rounding principle, the maximum error caused is $\bm{\epsilon}\leq\frac{|10^{-b}|}{2}$.
If \( \mathcal{M}_1 \cong_{3D} \mathcal{M}_2 \) (or \( \mathcal{M}_1 \cong_{3D-\frac{|10^{-b}|}{2}} \mathcal{M}_2 \)), \textit{i.e.}, molecules \( \mathcal{M}_1 \) and \( \mathcal{M}_2 \) are 3D isomorphic (or $(|10^{-b}|/2)$-constrained 3D isomorphic), then from Lemma~\ref{thm:CL} we know the canonical forms $\mL(\mathcal{M}_1)=\mL(\mathcal{M}_2)$. 
For the fragments $\{\mathcal{G}_i\}_{i=1}^k$ of $\mathcal{M}$, we construct canonical order $\mL(\mathcal{M})$ for the $k$ fragments, $\ell_1,\cdots,\ell_k$, specifically by the order of a fragment's first-ranked atom.
Let $\ell_i$ and $\ell^{\prime}_i$ denote the indexes of the fragment labeled $i$ correspondingly in $\mL(\mathcal{M}_1)$ and $\mL(\mathcal{M}_2)$, respectively. 
Since molecules \( \mathcal{M}_1 \) and \( \mathcal{M}_2 \) are 3D isomorphic (or $(|10^{-b}|/2)$-constrained 3D isomorphic), from Def.\ref{def:3d_iso} we know $\forall i$ of $\{\mathcal{G}_i\}_{i=1}^k, s_{\ell_i}= s_{\ell'_i}$;
and from Lemma~\ref{lemma:appendix} we know $\text{for } i =1,\cdots,k, f\circ g(\mathcal{M}_1)_{\ell_i} = f\circ g(\mathcal{M}_2)_{\ell^{\prime}_i}$ (or satisfies with $\frac{|10^{-b}|}{2}$ error range allowed for each numerical value). Thus, we have
\begin{equation}
   \begin{aligned}
    &\text{Frag2Seq}(\mathcal{M}_1) \\
    &=\text{concat}(\vx^*_{\ell_1},\cdots,\vx^*_{\ell_k}) \\
    &=\text{concat}_{s_i(\mathcal{G}_i), \mathcal{G}_i\in \mathcal{M}_1,d_i,\theta_i,\phi_i, m_{xi}, m_{yi}, m_{zi} \in f\circ g(\mathcal{M}_1), i=1,...k}([s_{\ell_i},d_{\ell_i},\theta_{\ell_i},\phi_{\ell_i}, m_{x{\ell_i}}, m_{y{\ell_i}}, m_{z{\ell_i}}]) \\
    &=\text{concat}_{s_i(\mathcal{G}_i), \mathcal{G}_i\in \mathcal{M}_2,d_i,\theta_i,\phi_i, m_{xi}, m_{yi}, m_{zi} \in f\circ g(\mathcal{M}_2), i=1,...k}([s_{\ell'_i},d_{\ell'_i},\theta_{\ell'_i},\phi_{\ell'_i}, m_{x{\ell'_i}}, m_{y{\ell'_i}}, m_{z{\ell'_i}}]) \\
    &=\text{concat}(\vx^*_{\ell'_1},\cdots,\vx^*_{\ell'_k}) \\
    &= \text{Frag2Seq}(\mathcal{M}_2),
\end{aligned} 
\end{equation}
where $s_i(\cdot)$ is the canonical SMILES string.
Therefore, we have shown that if two molecules are 3D isomorphic (or within the round-up error range $\frac{|10^{-b}|}{2}$), their sequences resulting from Frag2Seq must be identical.

Finally, we prove the equivalence from left to right. We provide proof by contradiction.
Given that $\text{Frag2Seq}(\mathcal{M}_1) = \text{Frag2Seq}(\mathcal{M}_2)$, we assume that the molecules \( \mathcal{M}_1 \) and \( \mathcal{M}_2 \) are not 3D isomorphic (or not $\frac{|10^{-b}|}{2}$-constrained 3D isomorphic).
We denote with $\mathcal{M}_1=(V_1, Z_1)$ and $\mathcal{M}_2=(V_2, Z_2)$.
If \( \mathcal{M}_1 \) and \( \mathcal{M}_2 \) are not even isomorphic for $Z_i$, then from Def.\ref{def:3d_iso}, there does not exist a node-to-node mapping from \( \mathcal{M}_1 \) to \( \mathcal{M}_2 \), where each atom is identical in atom number. Then from Lemma~\ref{thm:CL}, we know the canonical forms $\mL(\mathcal{M}_1)\neq\mL(\mathcal{M}_2)$. Thus for $\text{Frag2Seq}(\mathcal{M}_1)=$
$$\text{concat}_{s_i(\mathcal{G}_i), \mathcal{G}_i\in \mathcal{M}_1,d_i,\theta_i,\phi_i, m_{xi}, m_{yi}, m_{zi} \in f\circ g(\mathcal{M}_1), i=1,...k}([s_{\ell_i},d_{\ell_i},\theta_{\ell_i},\phi_{\ell_i}, m_{x{\ell_i}}, m_{y{\ell_i}}, m_{z{\ell_i}}]),$$ 
and $\text{Frag2Seq}(\mathcal{M}_2)=$
$$\text{concat}_{s_i(\mathcal{G}_i), \mathcal{G}_i\in \mathcal{M}_2,d_i,\theta_i,\phi_i, m_{xi}, m_{yi}, m_{zi} \in f\circ g(\mathcal{M}_2), i=1,...k}([s_{\ell'_i},d_{\ell'_i},\theta_{\ell'_i},\phi_{\ell'_i}, m_{x{\ell'_i}}, m_{y{\ell'_i}}, m_{z{\ell'_i}}]),$$
there must be at least one pair of $s_{\ell_i}, s_{\ell'_i}$ where $s_{\ell_i}\neq s_{\ell'_i}$. Therefore, $\text{Frag2Seq}(\mathcal{M}_1)\neq\text{Frag2Seq}(\mathcal{M}_2)$, which is a contradiction to the initial condition 
and ends the proof.

If \( \mathcal{M}_1 \) and \( \mathcal{M}_2 \) are isomorphic for $Z_i$, we continue with the following analyses.
Let $\ell_i$ and $\ell^{\prime}_i$ denote the indexes of the fragment labeled $i$ correspondingly in $\mL(\mathcal{M}_1)$ and $\mL(\mathcal{M}_2)$, respectively. 
Let \( f\circ g \) be the surjective function mapping a 3D graph to its spherical and rotation representations.
Since \( \mathcal{M}_1 \) and \( \mathcal{M}_2 \) are not 3D isomorphic (or not even $\frac{|10^{-b}|}{2}$-constrained 3D isomorphic), from Lemma~\ref{lemma:appendix}, we know there exists at least one 
\[ i\in \{1,\cdots,k\}, s.t. f\circ g(\mathcal{M}_1)_{\ell_i} \neq f(\mathcal{M}_2)_{\ell^{\prime}_i}, \]
(or even with error range $\frac{|10^{-b}|}{2}$ allowed),
otherwise, we would have \[ \forall i\in \{1,\cdots,k\}, f(\mathcal{M}_1)_{\ell_i} = f(\mathcal{M}_2)_{\ell^{\prime}_i} \Rightarrow \mathcal{M}_1 \cong_{3D} \mathcal{M}_2,\] contradicting the above condition. 
Thus for $\text{Frag2Seq}(\mathcal{M}_1)=$
$$\text{concat}_{s_i(\mathcal{G}_i), \mathcal{G}_i\in \mathcal{M}_1,d_i,\theta_i,\phi_i, m_{xi}, m_{yi}, m_{zi} \in f\circ g(\mathcal{M}_1), i=1,...k}([s_{\ell_i},d_{\ell_i},\theta_{\ell_i},\phi_{\ell_i}, m_{x{\ell_i}}, m_{y{\ell_i}}, m_{z{\ell_i}}]),$$ 
and $\text{Frag2Seq}(\mathcal{M}_2)=$
$$\text{concat}_{s_i(\mathcal{G}_i), \mathcal{G}_i\in \mathcal{M}_2,d_i,\theta_i,\phi_i, m_{xi}, m_{yi}, m_{zi} \in f\circ g(\mathcal{M}_2), i=1,...k}([s_{\ell'_i},d_{\ell'_i},\theta_{\ell'_i},\phi_{\ell'_i}, m_{x{\ell'_i}}, m_{y{\ell'_i}}, m_{z{\ell'_i}}]),$$
at least one pair of spherical coordinates does not correspond, so there must be at least one pair of $(d_{\ell_i},\theta_{\ell_i},\phi_{\ell_i}, m_{x{\ell_i}}, m_{y{\ell_i}}, m_{z{\ell_i}})$ and $ (d_{\ell'_i},\theta_{\ell'_i},\phi_{\ell'_i}, m_{x{\ell'_i}}, m_{y{\ell'_i}}, m_{z{\ell'_i}})$ where 
$$(d_{\ell_i},\theta_{\ell_i},\phi_{\ell_i}, m_{x{\ell_i}}, m_{y{\ell_i}}, m_{z{\ell_i}})\neq (d_{\ell'_i},\theta_{\ell'_i},\phi_{\ell'_i}, m_{x{\ell'_i}}, m_{y{\ell'_i}}, m_{z{\ell'_i}})$$ 
(or 
$$\min(|d_{\ell_i},\theta_{\ell_i},\phi_{\ell_i}, m_{x{\ell_i}}, m_{y{\ell_i}}, m_{z{\ell_i}}|- |d_{\ell'_i},\theta_{\ell'_i},\phi_{\ell'_i}, m_{x{\ell'_i}}, m_{y{\ell'_i}}, m_{z{\ell'_i}}|)>\frac{|10^{-b}|}{2}).$$
Thus, $\text{Frag2Seq}(\mathcal{M}_1)\neq\text{Frag2Seq}(\mathcal{M}_2)$, which contradicts the initial condition. 
Therefore, we have shown that if two constructed sequences from Frag2Seq are identical, their corresponding molecules must be 3D isomorphic. 
This ends the proof.

\end{proof}

\section{Experimental Details and Additional Results}
\label{app: exp_details_addtional_res}
\subsection{Dataset Details}
\label{app: dataset}
Following previous work in SBDD~\citep{schneuing2022structure,luo20213d,peng2022pocket2mol,guan20223d,guan2023decompdiff}, we use the CrossDocked2020 dataset to train and test our model. The dataset was originally created by~\citet{francoeur2020three}. We adopt the version that is further filtered in the previous work~\citep{luo20213d}. To increase the data quality, they filter out data points that have binding pose RMSD greater than 1\AA. The data split is performed according to 30\% sequence identity using MMseqs2~\citep{steinegger2017mmseqs2}. Finally, there are 100000 protein-ligand training pairs and 100 proteins in the test set. The dataset curated by~\citet{luo20213d} is publicly available at \url{https://github.com/luost26/3D-Generative-SBDD/blob/main/data/README.md} under the MIT license.

\subsection{Protein Encoder Details}
\label{app:esm}
We use the pre-trained ESM-IF1 model to obtain protein node embeddings. ESM-IF1 model is also known as GVP-Transformer that is developed by~\citet{hsu2022learning}. It is an inverse folding model that is used to design protein amino acid sequences given 3D backbone structures. The ESM-IF1 model uses a modified Geometric Vector Perceptron (GVP) layers~\citep{jing2020learning} to extract rotational and translational invariant features, followed by an autoregressive encoder-decoder Transformer~\citep{vaswani2017attention}. 
The model weights of ESM-IF1 are publicly available at~\url{https://github.com/facebookresearch/esm} under the MIT license.


\subsection{Network and Training Details}
\label{app: training}
For the language model, we use the GPT-1 architecture~\citep{radford2018improving} with 12 layers, $12$ attention heads, and a hidden embedding size of $768$. Cross-attention is added after self-attention in each layer. The protein embedding size generated by the ESM-IF1 model is $512$ and is increased to $768$ by an MLP layer before the protein embeddings are fed into the cross-attention. The architecture of the cross-attentions is the same as the self-attention, except that the key and value are from the protein embeddings.
The initial learning rate is set as $4e^{-4}$, and the batch size is 64.   
We implement our methods using PyTorch~\citep{paszke2019pytorch} and we adopt the AdamW optimizer~\citep{loshchilov2017decoupled} with $\beta_1 = 0.9$, $\beta_2 =0.95$. We adopt linear warm up and cosine scheduler to adaptively adjust the learning rate. Specifically, during the first $10\%$ of total training tokens, we increase the learning rate from $0$ to the initial learning rate. Then, the learning rate decreases to $4e^{-5}$ by using a cosine decay scheduler. We use a single NVIDIA A100 GPU to train our model. We set the maximum context length as $512$ for all the experiments except for the ablation study on converting proteins into sequences. In that case, the maximum context length is increased to $850$ to accommodate the preappended protein sequences. 

\subsection{Ablation Study}\label{appendix:ablation}
In this section, we conduct ablations studies on the 3D representation, fragmentation, and protein embedding in order to show how these factors affect the performance of the language model on the SBDD task. 

First, we compare the effect of different 3D representations of fragment center under $SE(3)$-equivariant molecule local frame proposed in~\cref{sec: frame construction}. We explore the use of $SE(3)$-invariant Cartesian coordinates instead of $SE(3)$-invariant spherical coordinates used in~\cref{sec: invariant representations}. From the result in~\cref{tab: ablation_SPH}, we can see that using spherical coordinates can achieve better binding affinity and chemical properties. By using spherical coordinates, the distances and angles are constrained to a smaller region, which makes the language model easier to learn structure correlations.

Then, we show that fragment-based tokenization is crucial to achieving better generation results. Specifically, we try to generate molecules atom-by-atom instead of using fragments. In this setting, we do not need to build fragment local frames, and rotation vectors do not exist in converted sequences. Results in~\cref{tab: ablation_frag} show that fragment-based generation can achieve much better performance on binding affinity and most drug-likeness properties, which validates the significance of fragmentation.

Finally, we compare different ways to incorporate protein pocket information into the model. Specifically, we convert protein structures to sequences in the same way as atom-based tokenization of small molecules. Then, we preappend the tokenized protein sequence to the corresponding ligand sequence and feed it into the language model. Results in~\cref{tab: ablation_ESM} show that extracting protein node embeddings and integrating them into the language model via cross-attention can more effectively capture the interaction between proteins and ligands.

\begin{table}[h]
  \caption{Generation performance between using invariant Cartesian coordinates and spherical coordinates representations of fragment centers.}
  \centering
  \resizebox{\textwidth}{!}{
  \begin{tabular}{lccccc}
    \toprule
    Methods     & Vina Score ($\downarrow$) & QED ($\uparrow$) & SA ($\uparrow$) & Lipinski ($\uparrow$) & Diversity ($\uparrow$) \\
    \midrule
    Frag2Seq-Invariant Coordinates & $-7.242 \pm 2.11$ &  $0.636 \pm 0.16$ & $0.637 \pm 0.10$ & $4.974 \pm 0.17$ & $\mathbf{0.714 \pm 0.09}$ \\
    Frag2Seq-Spherical Coordinates & $\mathbf{-7.366 \pm 1.96}$ & $\mathbf{0.645 \pm 0.15}$ & $\mathbf{0.642 \pm 0.11}$ & $\mathbf{4.989 \pm 0.11}$ & $0.711 \pm 0.07$ \\
    \bottomrule
  \end{tabular}
  }
  \label{tab: ablation_SPH}
\end{table}

\begin{table}[h]
  \caption{Generation performance between using atom-by-atom and fragment-by-fragment manners.}
  \centering
  \resizebox{\textwidth}{!}{
  \begin{tabular}{lccccc}
    \toprule
    Methods     & Vina Score ($\downarrow$) & QED ($\uparrow$) & SA ($\uparrow$) & Lipinski ($\uparrow$) & Diversity ($\uparrow$) \\
    \midrule
    Frag2Seq-Atom & $-6.895 \pm 1.42$ & $0.484 \pm 0.22$ & $\mathbf{0.683 \pm 0.11}$ & $4.677 \pm 0.50$ & $\mathbf{0.737 \pm 0.19}$ \\
    Frag2Seq-Fragment & $\mathbf{-7.366 \pm 1.96}$ & $\mathbf{0.645 \pm 0.15}$ & $0.642 \pm 0.11$ & $\mathbf{4.989 \pm 0.11}$ & $0.711 \pm 0.07$ \\
    \bottomrule
  \end{tabular}
  }
  \label{tab: ablation_frag}
\end{table}

\begin{table}[h]
  \caption{Generation performance between different ways to incorporate protein information.}
  \centering
  \resizebox{\textwidth}{!}{
  \begin{tabular}{lccccc}
    \toprule
    Methods     & Vina Score ($\downarrow$) & QED ($\uparrow$) & SA ($\uparrow$) & Lipinski ($\uparrow$) & Diversity ($\uparrow$) \\
    \midrule
Frag2Seq-Protein Seq & $-6.646 \pm 1.64$ & $0.605 \pm 0.17$ & $\mathbf{0.691 \pm 0.13}$ & $4.968 \pm 0.18$ & $0.493 \pm 0.39$ \\
    Frag2Seq-ESM-IF1 & $\mathbf{-7.366 \pm 1.96}$ & $\mathbf{0.645 \pm 0.15}$ & $0.642 \pm 0.11$ & $\mathbf{4.989 \pm 0.11}$ & $\mathbf{0.711 \pm 0.07}$ \\
    \bottomrule
  \end{tabular}
  }
  \label{tab: ablation_ESM}
\end{table}

\subsection{Generation Efficiency Analysis}
\label{app: efficiency_param_memory}
In this section, we compare the generation efficiency of our method and other representative baseline methods. We use a single NVIDIA 2080Ti GPU and a batch size of $10$. The results in~\cref{tab: efficiency} show that our method exhibits better sampling efficiency than other autoregressive and diffusion-based models.   

\begin{table}[!h]
  \caption{Generation efficiency comparison among autoregressive methods, diffusion methods, and our fragment-based LM method.}
  \label{sample-table}
  \centering
  \resizebox{0.6\textwidth}{!}{
  \begin{tabular}{lccc}
    \toprule
    Methods     & Parameters & Memory & Sample/second \\
    \midrule
    3D-SBDD & $1.2$M & $3.4$GB & $0.005$ \\
    Pocket2Mol & $3.7$M & $1.2$GB & $0.008$ \\
    DrugGPS & $5.1$M & $2.5$GB & $0.73$\\
    TargetDiff & $2.8$M & $1.8$GB & $0.01$\\ 
    Frag2Seq & $134.3$M & $2.2$GB & $2.0$ \\

    \bottomrule
  \end{tabular}
  }
  \label{tab: efficiency}
\end{table}

\subsection{More Structural Analysis}
\label{app: structural_analysis}
To further evaluate the structures of generated molecules, we calculate the Kullback-Leibler (KL) divergence of bond angles and dihedral angles between generated molecules and reference molecules in the test set. The results in~\cref{tab: angle_distribution} show that our method achieves better KL divergence than other representative baselines in most cases, which demonstrates that Frag2Seq can well capture the structural distribution of the data and generate more valid substructures.

\begin{table}[h]
  \caption{The KL divergence of the distribution of the bond and dihedral angles between generated molecules and reference molecules. The lower letters indicate the atoms in the aromatic rings.}
  \centering
  \resizebox{0.8\textwidth}{!}{
  \begin{tabular}{lcccccc}
    \toprule
    Angles   & liGAN & 3D-SBDD & Pocket2Mol &TargetDiff & FLAG & Frag2Seq \\
    \midrule
    CCC & $1.266$ & $0.465$ & $0.770$ & $0.507$ & $0.497$ & $\mathbf{0.263}$ \\
    CCO & $1.475$ & $0.618$ & $1.104$ & $0.713$ & $0.768$ & $\mathbf{0.306}$ \\
    NCC & $1.312$ & $0.619$ & $0.754$ & $0.585$ & $\mathbf{0.427}$ & $0.578$ \\
    \midrule
    CCCC & $0.197$ & $0.250$ & $0.215$ & $0.170$ & $0.164$ & $\mathbf{0.113}$\\
    cccc & $0.786$ & $1.044$ & $-$ & $0.229$ & $0.516$ & $\mathbf{0.041}$\\
    CCCO & $0.206$ & $0.263$ & $0.245$ & $0.191$ & $\mathbf{0.180}$ & $0.198$\\
    Cccc & $0.725$ & $0.939$ & $-$ & $0.397$ & $0.311$ & $\mathbf{0.109}$\\
    CC=CC & $0.526$ & $0.655$ & $0.296$ & $0.301$ & $0.345$ & $\mathbf{0.133}$\\
    \bottomrule
  \end{tabular}
  }
  \label{tab: angle_distribution}
\end{table}

\end{document}